\documentclass[english,journal]{IEEEtran}
\usepackage[T1]{fontenc}
\usepackage[latin9]{inputenc}
\usepackage{array}
\usepackage{cprotect}
\usepackage{mathrsfs}
\usepackage{multirow}
\usepackage{dsfont}
\usepackage{amsmath}
\usepackage{amsthm}
\usepackage{amssymb}
\usepackage{graphicx}

\makeatletter

\providecommand{\tabularnewline}{\\}
\newcommand{\lyxdot}{.}

\theoremstyle{plain}
\newtheorem{lem}{\protect\lemmaname}
\theoremstyle{plain}
\newtheorem{thm}{\protect\theoremname}

\@ifundefined{date}{}{\date{}}
\usepackage{amsfonts}
\usepackage{cite}
\usepackage{array}
\usepackage{algorithm}
\usepackage{algorithmic}
\usepackage{subfigure}
\usepackage{stfloats}
\usepackage{gensymb}

\makeatother

\usepackage{babel}

\makeatother

\usepackage{babel}
\providecommand{\lemmaname}{Lemma}
\providecommand{\theoremname}{Theorem}

\begin{document}
\title{PoVD: Efficient Consensus Protocol\\
 based on Verifiable Delay Function}
\author{Rui~Jiang, Xintong~Ling,~\IEEEmembership{Member,~IEEE}, Bin
Cao,~\IEEEmembership{Senior Member,~IEEE},\\
Jiaheng~Wang,~\IEEEmembership{Senior Member,~IEEE}, Xiqi~Gao,~\IEEEmembership{Fellow,~IEEE},
Zhi~Ding,~\IEEEmembership{Fellow,~IEEE}\thanks{R. Jiang, X. Ling, J. Wang, and X. Gao are with the National Mobile
Communications Research Laboratory, Southeast University, Nanjing
210096, China, and also with the Purple Mountain Laboratories, Nanjing
210023, China (e-mail: \{ruijiang, xtling, jhwang, xqgao\}@seu.edu.cn).
J. Wang is also with School of Cyber Science and Engineering, Southeast
University, Nanjing 210096, China. Bin Cao is with the State Key Laboratory
of Networking and Switching Technology, Beijing University of Posts
and Telecommunications, Beijing 100876, China (e-mail: caobin@bupt.edu.cn).
Z. Ding is with Department of Electrical and Computer Engineering,
University of California, Davis, California, 95616 (e-mail: zding@ucdavis.edu).}}
\maketitle
\begin{abstract}
Consensus protocols ensure the robustness and scalability of blockchains
and decentralized applications built on them. However, existing consensus
mechanisms often impose high computational cost or require heavy communication
overhead. To address these challenges, we propose proof of verifiable
delay (PoVD), a lightweight consensus protocol based on the verifiable
delay function (VDF). We present the detailed protocol of PoVD, including
the block mining and verification rules, and illustrate how PoVD can
flexibly adjust the block time distribution according to the network
condition. Through mathematical proof, we point out that, under a
relatively weak network assumption, PoVD can achieve a lower fork
rate than PoW while maintaining the same throughput. Our experiments
verify that PoVD exhibits low computation and communication complexity
and can also improve the blockchain consistency, making it particularly
suitable for resource-constrained nodes and bandwidth-limited networks.
\end{abstract}

\begin{IEEEkeywords}
Blockchain, consensus protocol, distributed system, modeling technique,
verifiable delay function (VDF).
\end{IEEEkeywords}

\section{Introduction}

With the rapid development of distributed applications, blockchains
have emerged as foundational infrastructures for supporting distributed
systems across various sectors, such as telecommunications, the Internet
of Things (IoT), and unmanned aerial vehicles (UAVs) \cite{Ling2025,Wu2025resource,cao2025optimization,Zou2021-2}.
As the core component of blockchain systems, the consensus protocol
is required to maintain network consistency among participating devices
and ensure the robustness and scalability of decentralized services
and applications \cite{Xu2023,Cui2025,Ling2025a,Wan2019}. However,
designing highly efficient consensus protocols for resource-constrained
and bandwidth-limited environments remains challenging. In such environments,
nodes often have limited resources and bandwidths and are unable to
handle high-frequency message exchanges or intensive computational
loads \cite{Chen2025,Zou2024}.

Generally, existing blockchain consensus protocols can be broadly
classified into two categories: interaction-based protocols that rely
on multi-round message exchange and proof-based protocols that often
require intensive computations. For example, practical Byzantine fault
tolerance (PBFT) is a well-known interaction-based protocol that achieves
agreement through four sequential stages, including request submission,
message propagation, verification, and commitment \cite{Castro1999}.
However, its $O\left(n^{2}\right)$ messaging complexity incurs significant
communication overhead \cite{Luo2024a}. Meanwhile, proof of work
(PoW) is the most recognized proof-based consensus protocol \cite{Nakamoto2008}.
In PoW, nodes compete in solving intensive cryptographic puzzles to
generate the next block, which consumes significant computational
resources.

As one can see, both types of protocols have their drawbacks. On the
one hand, interaction-based protocols require frequent multi-round
message exchanges, which leads to high communication complexity and
poor scalability for large networks. Especially, in bandwidth-limited
networks, such repeated message exchanges can increase block propagation
delay and raise the probability of forks, thereby undermining ledger
consistency. On the other hand, energy-intensive mining process of
proof-based protocols requires miners to use specialized hardware
with extremely huge power consumption, which not only raises sustainability
concerns but also limits more miners to participating in blockchain
maintenance. As a result, an efficient consensus problem must take
both communication overhead and computation cost into consideration.

The verifiable delay function (VDF) has the potential to address the
above key challenges. VDF is a cryptographic technique that cannot
be accelerated through parallel computing\cite{Wu2022}. Nodes cannot
predict the result in advance but have to compute for a certain period
of time to obtain the solution to VDF. That is, VDF can ensure that
each node experiences a verifiable delay without consuming too much
computing power, which is applicable for resource-constrained devices.
Moreover, the output of the VDF can be efficiently verified, which
can avoid multi-round interactions or a large amount of computations.
Therefore, it is promising to use VDF to design advanced consensus
protocols to reduce both communication overheads and computation consumption.

\subsection{Related Work}

Most existing lightweight consensus protocols are permissioned and
based on message exchanges. Recent studies in \cite{Luo2024a,Li2020TPDS,Mohammad2023Hotstuff,Cheng2025JUMBO}
aim to lower the consensus overhead of BFT-style consensus by, e.g.,
localizing commit stage communication \cite{Li2020TPDS}, streamlining
the pre-prepare and prepare pipelines \cite{Mohammad2023Hotstuff},
or reducing the cost of verifying quorum certificates \cite{Cheng2025JUMBO}.
RAFT-based consensus protocols attempt to improve the heartbeat mechanism
by embedding useful information into heartbeat messages \cite{Xu2021}
or reducing the frequency of heartbeat transmissions \cite{Luo2024}.

Several works adopted a simpler approach to reduce consensus overhead
via leader election and lightweight agreement workflows. \cite{Islam2025PoRL,Zhao2024Context}
used verifiable random function to simplify proposer selection. The
authors of \cite{Zhai2024MultiLeader,Zhai2024AccountCommit} proposed
committee-election mechanisms, which confine voting to small subsets
to limit messaging. \cite{Fu2022VaaP} employed a vote-based proof
mechanism to simplify the proposal and confirmation workflow. \cite{Zhou2024Accelerate}
introduced decentralized coordinator services with verifiable global
states. \cite{Hao2024BitFT} designed a resource-efficient consensus
protocol that combines multi-round sortition with vote-based confirmation.

\begin{table*}
\caption{Related lightweight consensus protocols. \label{tab:Related works}}
\scriptsize\centering\scalebox{1.1}{\renewcommand\arraystretch{1.3}

\begin{tabular}{|c|c|c|c|c|c|c|}
\hline 
\multirow{2}{*}{Ref.} & \multirow{2}{*}{Name} & \multirow{2}{*}{Year} & \multirow{2}{*}{Type} & \multirow{2}{*}{Design Focus} & \multirow{2}{*}{Role of VDF} & \multirow{2}{*}{Protocol Framework}\tabularnewline
 &  &  &  &  &  & \tabularnewline
\hline 
\hline 
\cite{Long2019} & / & 2019 & Permissionless & Resource consumption & Replace PoW & \multicolumn{1}{c|}{PoW-fashion}\tabularnewline
\hline 
\cite{Han2020} & RandChain & 2020 & Permissionless & Communication overhead & Replace PoW & PoW-fashion\tabularnewline
\hline 
\cite{Li2020TPDS} & / & 2020 & Permissioned & Communication overhead & Not used & PBFT\tabularnewline
\hline 
\cite{Deb2021} & PoSAT & 2021 & Permissioned & Dynamic availability consensus & Random beacon & Leader selection\tabularnewline
\hline 
\cite{Raikwar2021} & R3V & 2021 & Permissioned & Communication complexity and fairness & Replace PoW & PoW-fashion\tabularnewline
\hline 
\cite{Xu2021} & Weighted RAFT & 2021 & Permissioned & Communication latency & Not used & RAFT\tabularnewline
\hline 
\cite{Fu2022VaaP} & VaaP & 2022 & Permissioned & Communication overhead & Not used & BFT\tabularnewline
\hline 
\cite{Raikwar2022} & / & 2022 & Permissioned & Fairness & Random beacon & Leader election\tabularnewline
\hline 
\cite{Xu2022-2} & Fairledger & 2022 & Permissioned & Resource consumption and fairness & Replace PoW & PoW-fashion\tabularnewline
\hline 
\cite{Mohammad2023Hotstuff} & Fast-HotStuff & 2023 & Permissioned & Communication overhead & Not used & \multicolumn{1}{c|}{BFT}\tabularnewline
\hline 
\cite{Li2023} & BLMA & 2023 & Permissioned & Resource consumption and communication latency & Not used & \multicolumn{1}{c|}{PBFT}\tabularnewline
\hline 
\cite{pu2023} & Gorilla & 2023 & Permissioned & Protocol operation efficiency and safety & Random beacon & \multicolumn{1}{c|}{BFT}\tabularnewline
\hline 
\cite{Zhai2024MultiLeader} & / & 2024 & Permissioned & Fairness & Not used & Leader election\tabularnewline
\hline 
\cite{Zhai2024AccountCommit} & / & 2024 & Permissioned & Protocol operation efficiency and safety & Not used & Leader election\tabularnewline
\hline 
\cite{Zhou2024Accelerate} & / & 2024 & Permissioned & Communication latency & Not used & BFT\tabularnewline
\hline 
\cite{Hao2024BitFT} & BitFT & 2024 & Permissioned & Resource consumption & Not used & BFT\tabularnewline
\hline 
\cite{Zhao2024Context} & / & 2024 & Permissioned & Protocol operation efficiency and safety & Not used & BFT\tabularnewline
\hline 
\cite{Wang2024} & / & 2024 & Permissionless & Protocol operation efficiency and safety & Random beacon & Leader election\tabularnewline
\hline 
\cite{Mirkin2024} & Sprints & 2024 & Permissionless & Resource consumption & Replace PoW & PoW-fashion\tabularnewline
\hline 
\cite{Das2024} & / & 2024 & Permissioned & Protocol operation efficiency and safety & Random beacon & BFT\tabularnewline
\hline 
\cite{Luo2024a} & GS & 2024 & Permissioned & Resource consumption & Not used & PBFT\tabularnewline
\hline 
\cite{Luo2024} & SBC & 2024 & Permissioned & Resource consumption and communication latency & Not used & RAFT\tabularnewline
\hline 
\cite{Xiong2025} & PoVF & 2025 & Permissioned & Protocol operation efficiency & Random beacon & Leader election\tabularnewline
\hline 
\cite{Islam2025PoRL} & PoRL & 2025 & Permissioned & Protocol operation efficiency and safety & Not used & Leader election\tabularnewline
\hline 
\cite{Cheng2025JUMBO} & JUMBO & 2025 & Permissioned & Resource consumption and communication latency & Not used & BFT\tabularnewline
\hline 
\end{tabular}}
\vspace{-0.3cm}
\end{table*}

Most recently, VDF is introduced in consensus protocols as a cutting-edge
technology \cite{Boneh2024}; however, it has not been well utilized
yet. In general, VDF plays two roles in consensus protocols. First,
VDF is used as the random beacon for leader or committee member elections.
For example, \cite{Deb2021,Xu2022-2,pu2023,Raikwar2022} leveraged
the unpredictability of VDF outputs to elect leaders for generating
new blocks. \cite{Xiong2025,Wang2024} utilized VDF as a random timer,
allowing only committee members within designated time windows to
propose new blocks. In \cite{Das2024}, VDF was introduced to support
fixed-time polling in Byzantine agreement (BA) protocols and promoted
election fairness and efficiency in the consensus process. In such
consensus protocols, the core idea of introducing VDF is to improve
the reliability of the election results.

As another approach, VDF is used as a mining puzzle to replace the
hash function. \cite{Long2019,Han2020} are two typical examples where
the first miner solving the VDF puzzle wins the right to create a
new block. Some, such as Sprints \cite{Mirkin2024}, use the same
principle but associated VDF-based mining with hash puzzles. These
protocols \cite{Long2019,Mirkin2024,Han2020} did not prevent miners
from performing parallel computations with different VDF inputs. As
a result, miners may attempt to accelerate block generation by investing
computational power, leading to high energy consumption. The characteristics
of these schemes are similar to PoW. To resist parallel computing,
R3V \cite{Raikwar2021} limited the VDF input and excluded the block
hash from the mining process. However, without involving the block
hash into the mining process, such a consensus protocol cannot protect
the on-chain information from being tampered with. 

Table \ref{tab:Related works} summarizes the above related studies
on consensus protocols. We can identify two fundamental drawbacks
of existing works from Table \ref{tab:Related works}. Most lightweight
consensus protocols without using VDF are often based on intensive
interactions. Although some designs attempt to reduce communication
complexity, they still require repeated broadcasting and flooding,
imposing a heavy burden on the network. Particularly, as the network
size grows, these schemes are not scalable because of rapidly increasing
communication overhead. As a result, some works have to set the scenario
within single-hop or single-cell networks, which are somehow centralized.

Truly, VDF can largely reduce the amount of message exchange. However,
the role of VDF is still questionable. The existing consensus protocols
based on VDF face the dilemma between computational efficiency and
blockchain security. From the above review, one can see that the dilemma
comes from how the protocol associates the block content or its digest,
such as the block hash, with the VDF input. The security and integrity
of most blockchains are based on the connections between blocks based
on the block hashes to protect the on-chain data from being modified.
However, when mining, miners can change the block content (e.g., change
the order of transactions) and generate multiple block hashes so that
they can compute with multiple VDF inputs in parallel. In this way,
miners can gain extra advantages by investing additional computing
power, resulting in a similar energy consumption problem as PoW does.
However, if we remove the block content (or its digest) from the VDF
computing, then the new chain structure cannot guarantee the integrity
of on-chain information. The above dilemma may be one of the important
reasons why VDF has not been widely used for consensus protocols. 

\subsection{Our Contributions}

In this work, we propose an efficient consensus protocol based on
VDF, named proof of verifiable delay (PoVD), which is both communication-
and computation-efficient and suitable for deployment in resource-constrained
and bandwidth-limited environments. PoVD does not require intensive
message exchange and can operate in a permissionless setting. PoVD
introduces VDF to resist parallel computing so that miners cannot
gain advantages by increasing computational resources, and thus, PoVD
can effectively reduce energy consumption. Furthermore, PoVD can flexibly
adjust the block time distribution so that it can achieve better chain
performance according to the network condition. Our experiments show
that PoVD is suitable for resource-constrained devices in bandwidth-limited
networks. We summarize our main contributions as follows:
\begin{itemize}
\item We design the PoVD consensus protocol with the blockchain structure
and illustrate the entire consensus process in detail, including VDF-based
mining, VDF proof generation, and block verification.
\item Since PoVD can adjust the block time distribution, we build a mining
model under arbitrary block time distributions and assess the performance
of PoVD in terms of fork rate and blockchain throughput.
\item Through rigorous mathematical proof, we point out that, under a relatively
weak network condition, PoVD can achieve a lower fork rate than PoW
with the same blockchain throughput.
\item By comparing with several representative benchmarks, we demonstrate
PoVD\textquoteright s low computational and communication complexities
and highlight PoVD\textquoteright s advantage in improving blockchain
consistency. The code is available at: https://github.com/RayJ9/VDF-based-PoVD.
\end{itemize}
The remainder of the paper is structured as follows. Section \ref{sec:System Model}
presents the system model. Section \ref{sec:Verifiable Delay Function}
describes the basic principles of VDF. Section \ref{sec:Consensus-Protocol}
illustrates the PoVD protocol. Section \ref{sec:Protocol Modeling}
models the PoVD mining process. Section \ref{sec:Protocol Analysis}
shows the superiority of PoVD regarding fork rate. Section \ref{sec:Simulation}
presents experimental results, and Section \ref{sec:Conclusions}
concludes the paper.

\section{\label{sec:System Model}System Model}

\subsection{Byzantine Agreement Setting}

Our study is based on the BA setting \cite{Ni2020,Garay2024,Das2024},
which defines a system with a finite number of participants and models
the system's operation by dividing real-world continuous time into
discrete rounds of equal duration. All nodes operate under the same
constraints in each round. The discrete BA setting can approach the
accurate continuous-time performance with a shorter round duration.

Specifically, in BA setting, the system comprises $n$ miners. These
miners attempt to generate new blocks by computing random oracles
in every round. For example, in PoW, the random oracle is the hash
function such as SHA256. In general, if a miner obtains a valid oracle
solution, the miner wins the right to generate a new block.

The new block will be broadcast to the entire network after being
generated. The rest of the miners will verify it based on the verification
criteria of the consensus protocol. If the block is verified, miners
will add it to their local chains to synchronize the state. Usually,
the verification time is negligible compared to the time spent generating
new blocks \cite{Das2024}. 

Furthermore, we assume a flat model where all miners possess identical
computational capabilities, i.e., every miner performs the same number
of oracles per round \cite{Ni2020}. Even though this assumption may
not hold in practice, we can cluster the flat-model miners into larger
virtual entities each of which comprises more than one flat-model
player to describe the practical non-flat case where miners have differing
computational capabilities. For example, if there are 10 miners, with
half possessing twice the computing power of the others, we can create
a new model of 15 miners, where 10 pair up to form a new unit. This
adjustment allows the analysis to proceed as though there are 15 equivalent
miners.

\subsection{\label{subsec:Proof-of-Work}PoW in BA Setting}

Next, we use PoW as a typical example to illustrate the above BA setting.
Miner $i$ continuously attempts different inputs, i.e., nonce, to
calculate hash values satisfying the difficult target:
\begin{equation}
y_{h,i}=\mathrm{Hash}\left(y_{h-1},\mathrm{ID}_{i},m_{h,i},\pi_{h,i}\right)<\varphi,
\end{equation}
where $h$ is the blockchain height, $y_{h-1}$ represents the hash
of the previous block, $\textrm{ID}_{i}$ is the miner's identity,
$m_{h,i}$ is the block payload data represented by its digest, $\pi_{h,i}$
denotes the nonce, and $\varphi$ indicates the current hash target.
When a miner finds a nonce whose hash value is smaller than the target
$\varphi$, the miner generates a new block with the corresponding
hash value and broadcasts it to the other miners. The newly generated
block will be added into the main chain if there is no other fork,
and the blockchain grows in this way. The hash target $\varphi$ is
determined by the mining difficulty to control the average block time.
For example, a larger hash target implies an easier mining process
and thus a shorter block time. We show the workflow of PoW in Fig.
\ref{fig:Workflow of the PoW} of Section IV.

\subsection{Network Model}

Usually, block propagation inevitably experiences delays, particularly
in limited network environments. To characterize the propagation
process, we introduce a $d$-dimensional propagation vector:
\begin{align}
\mathbf{w}_{d}= & \left(w_{0},w_{1},\ldots,w_{d-1}\right),
\end{align}
where $d$ is the maximum propagation delay. The element $w_{i}\in\left[0,1\right)$
for $i=0,1,...,d-1$ represents that $nw_{i}$ miners receive the
block in round $r+i$ if the block is generated in round $r$. Note
that, if a block is generated in round $r$, $n-1$ miners are not
yet aware of the block in round $r$, implying that $w_{0}$ is always
$1/n$. During the propagation process, more miners will receive the
block, so $w_{0}\leq w_{1}\leq\cdots\leq w_{d-1}<1$. Given the maximum
propagation delay $d$, all the miners will receive the block in round
$r+d$. In an ergodic and stable network, $\mathbf{w}_{d}$ can be
statistically measured through experimental results. Furthermore,
we define the network capability indicator $c$ as
\begin{equation}
c=\frac{1}{d}\sum_{i=0}^{d-1}w_{i},\label{eq:network value}
\end{equation}
which can reflect the network's propagation capability. Given the
maximum delay $d$, a larger $c$ implies more miners receive the
block earlier, i.e., a faster propagation speed. Note that the network
capability indicators $c$ are incomparable for two networks with
different $d$. Network propagation model does not involve the number
of miners and thus can provide convenience for subsequent mathematical
modeling.

\section{\label{sec:Verifiable Delay Function}Verifiable Delay Function}

\subsection{Definition}

This section will provide a detailed definition of VDF and explain
its principle \cite{Boneh2024}. A verifiable delay function is a
function $f$: $\mathds{\mathbb{Z}}\rightarrow\mathscr{\mathbb{Y}}$
that requires a predetermined amount of time to compute and cannot
be accelerated by parallel processing. Every input $z\in\mathds{\mathbb{Z}}$
has a valid output $y\in\mathscr{\mathbb{Y}}$. Moreover, once the
computation is complete, anyone can verify the output quickly. 

More specifically, a VDF that implements a function $\mathds{\mathbb{Z}}\rightarrow\mathscr{\mathbb{Y}}$
is a tuple of three algorithms: 
\begin{itemize}
\item $\mathrm{Setup}\left(\mathsf{sp},T\right)\rightarrow\mathbf{\mathsf{pp}}$
is a randomized algorithm that takes a security parameter $\mathsf{sp}$
and a time bound $T$, and outputs a public parameter $\mathbf{\mathsf{pp}}$.
\item $\mathrm{Eval}\left(\mathsf{pp},z\right)\rightarrow\left(y,\pi\right)$
takes an input $z\in\mathds{\mathbb{Z}}$ and outputs the VDF result
$y\in\mathscr{\mathbb{Y}}$ and the proof $\pi$.
\item $\mathrm{Verify}\left(\mathsf{pp},z,y,\pi\right)\rightarrow\left\{ \mathrm{accept};\mathrm{reject}\right\} $
outputs accept if $y$ is the correct evaluation of the VDF on input
$z$, or outputs reject if $y$ is incorrect.
\end{itemize}
Meanwhile, a VDF must satisfy three properties:
\begin{itemize}
\item Bounded-evaluation time. $\mathrm{Eval}\left(\mathbf{\mathsf{pp}},z\right)\rightarrow\left(y,\pi\right)$
runs no less than $T$ time, for all $z\in\mathds{\mathbb{Z}}$ and
all $\mathsf{pp}$ output by $\textrm{Setup}\left(\mathsf{sp},T\right)$. 
\item Serialization. A parallel algorithm $\mathscr{A}$ that uses at most
$\textrm{poly}\left(\mathsf{sp}\right)$ processors and runs in time
less than $T$ cannot compute the function. Specifically, for a random
$z\in\mathds{\mathbb{Z}}$ and $\mathbf{\mathsf{pp}}$ output by $\mathrm{Setup}\left(\mathsf{sp},T\right)$,
if the correct result of $\textrm{Eval}\left(\mathbf{\mathsf{pp}},z\right)$
is $\left(y,\pi\right)$ then $\textrm{Pr}\left(\mathscr{A}\left(\mathbf{\mathsf{pp}},z\right)=y\right)$
is negligible. 
\item Uniqueness. For a given input $z\in\mathds{\mathbb{Z}}$, exactly
one $y\in\mathscr{\mathbb{Y}}$ will be accepted. Specifically, given
$\mathsf{pp}$ as an input, let $\mathscr{A}$ be an efficient algorithm
that outputs $\left(z,y,\pi\right)$ such that $\textrm{Verify}\left(\mathbf{\mathsf{pp}},z,y,\pi\right)=\mathrm{accept}$.
Then $\textrm{Pr}\left(\mathscr{A}\left(\mathbf{\mathsf{pp}},z\right)\neq y\right)$
is negligible.
\end{itemize}

\subsection{Wesolowski's Implementation}

Currently, there are several implementation approaches to VDF. This
subsection describes Wesolowski's scheme \cite{Boneh2024}, which
serves as the foundation for our proposed PoVD. Specifically, the
input consists of a tuple $\left(\mathbb{G},z,y,T\right)$, where
$\mathbb{G}$ denotes an integer group with finite elements. A prover
and a verifier are included in the following protocol to prove $y=z^{\left(2^{T}\right)}\in\mathbb{G}$.
Let $\textrm{Primes}\left(\kappa\right)$ be the set containing the
first $2^{\kappa}$ primes.
\begin{enumerate}
\item The verifier checks whether $z,y\in\mathbb{G}$ or not.
\item The verifier sends to the prover a random prime $v$ sampled uniformly
from $\textrm{Primes}(\kappa)$.
\item The prover computes $u,\varepsilon\in\mathbb{Z}_{++}$ such that $2^{T}=uv+\varepsilon$
with $0\leq\varepsilon<v$, and sends $\pi\leftarrow z^{u}$ to the
verifier. Here, $\mathbb{Z}_{++}$ denotes the positive integer set.
\item The verifier computes $\varepsilon\leftarrow2^{T}$ mod $v$ and outputs
$\mathrm{accept}$ if $\pi\in\mathbb{G}$ and $y=\pi^{v}z^{\varepsilon}\in\mathbb{G}$.
\end{enumerate}
The verifier computes $\varepsilon\leftarrow2^{T}$ mod $v$, which
only takes $\log_{2}T$ multiplications in $\mathbb{Z}_{++}/v$. In
\cite{Boneh2024}, a constant space algorithm for efficiently computing
$\pi=z^{u}\in\mathbb{G}$ is implemented. Hence, the verifier is efficient.

\section{\label{sec:Consensus-Protocol}PoVD Consensus Protocol}

\subsection{Overview}

We first illustrate the block structure of PoVD and how blocks connect
to one another in Fig. \ref{fig:Blockchain Structure.}. Every block
includes the block proposer's ID, the last block's VDF result $y_{h-1}$,
the current block's payload data $m_{h}$ (recorded by the hash value),
the VDF result $y_{h}$, and the proof $\pi_{h}$. Similarly to PoW,
PoVD also has the mining and verification phases, as shown in Fig.
\ref{fig:VDF-based consensus protocol}. (For comparison, we also
show the workflow of PoW in Fig. \ref{fig:Workflow of the PoW}.)
Specifically, in the PoVD mining process, miners compute the VDF,
generate the necessary proofs, and encapsulate them into the proposed
blocks. The miner who first solves the VDF puzzle earns the right
to propose a new block. Once a new block is proposed, other miners
in the network verify it, i.e., the verification phase. They verify
the VDF solution's correctness and proof and check the block's consistency
with the previous block. If a block passes all the verification, it
is appended to the miner's local blockchain. The blockchain then grows
in this manner.
\begin{figure}[t]
\centering{}\includegraphics[width=0.5\textwidth]{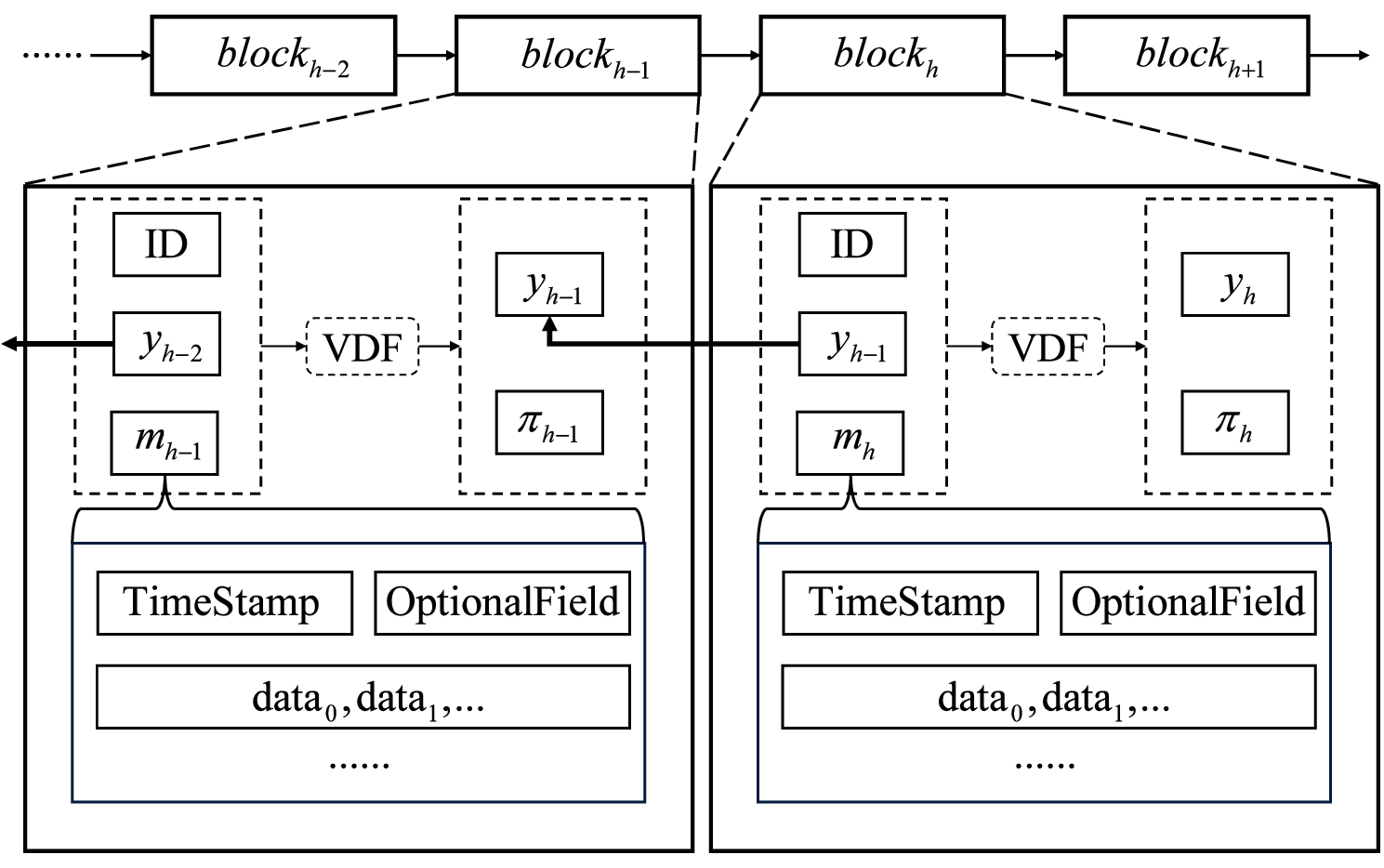}\caption{Illustration of PoVD block structure. \label{fig:Blockchain Structure.}}
\vspace{-0.6cm}
\end{figure}
\begin{figure}[t]
\centering{}\subfigure[]{\includegraphics[width=0.4\textwidth]{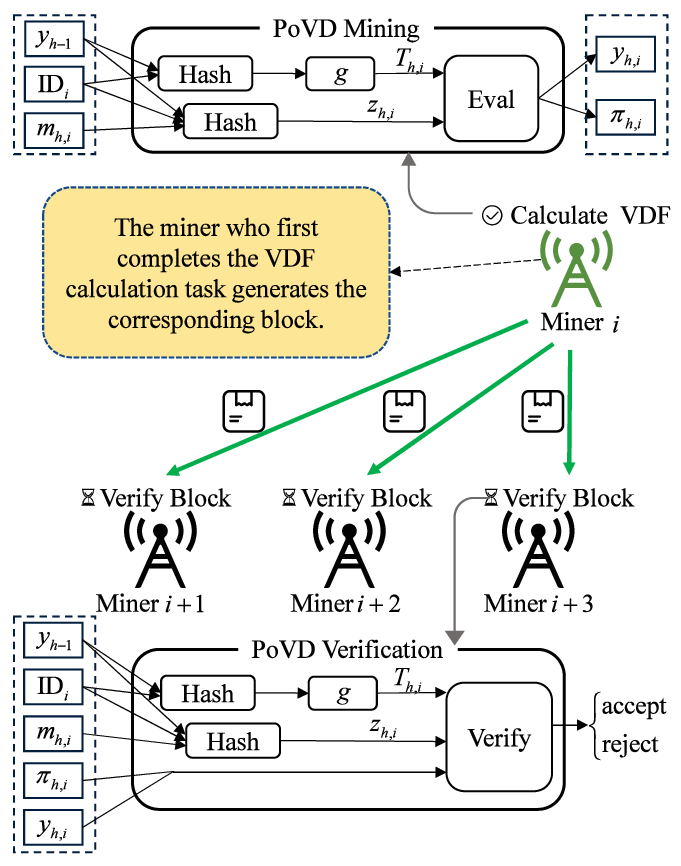}
\label{fig:VDF-based consensus protocol}}\hfill{}\subfigure[]{\includegraphics[width=0.4\textwidth]{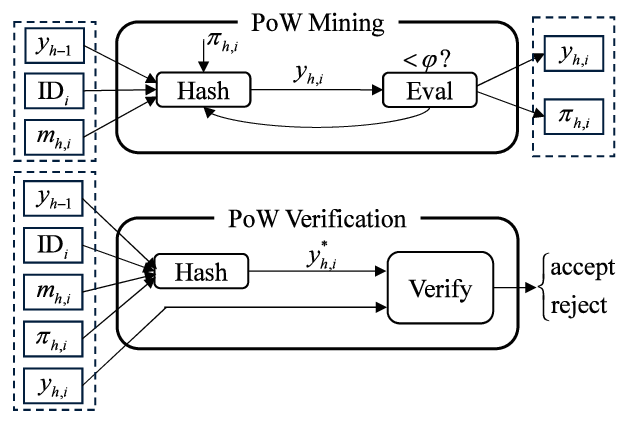}\label{fig:Workflow of the PoW}}\caption{Block mining process and verification rules of PoVD and PoW consensus
protocols. (a) PoVD. (b) PoW.}
\vspace{-0.6cm}
\end{figure}

\subsection{\label{subsec:PoVD-Mining}PoVD Mining}

Consider a blockchain with height $h-1$. For miner $i$ attempting
to generate a block at height $h$, the VDF mining process begins
with generating the input $z_{h,i}$ and time parameter $T_{h,i}$:
\begin{align}
z_{h,i} & =\textrm{Hash}\left(y_{h-1},\textrm{ID}_{i},m_{h,i}\right),\label{eq:x gener}\\
T_{h,i} & =g\left(\textrm{Hash}\left(y_{h-1},\textrm{ID}_{i}\right)\right).\label{eq:T gener}
\end{align}
Here, $y_{h-1}$ is the VDF result from the previous block at height
$h-1$, which is usually the same for all miners in the system, and
therefore, we omit the subscript $i$ representing the winner of the
last block. $\textrm{ID}_{i}$ is the miner's identity, and $m_{h,i}$
is the block hash representing the block payload data. Meanwhile,
the calculation of $z_{h,i}$ and $T_{h,i}$ requires two hash functions:
\begin{itemize}
\item $\textrm{Hash}$: $\mathbb{S}\rightarrow\mathbb{C}$, which is a standard
cryptographic hash function, with $\mathbb{S}$ representing the set
of strings of arbitrary length and $\mathbb{C}$ denoting a set of
integers. This function maps $\mathbb{S}$ uniformly into $\mathbb{C}$. 
\item $g$: $\mathbb{C}\rightarrow\mathbb{G}$, which maps integers from
$\mathbb{C}$ to another integer group $\mathbb{G}$ with predefined
settings.
\end{itemize}
The number and the range of elements in the set $\mathbb{C}$ depend
on the function $\textrm{Hash}$. For example, in the widely used
SHA-256, $\mathbb{C}$ is the set of integers from 0 to $2^{256}-1$,
with each element having an identical probability of being selected.
For any arbitrary outputs $x$, its values are uniformly distributed
among the set $\mathbb{C}$, i.e., 
\begin{equation}
\textrm{Pr}\left(x=\textrm{Hash}\left(s\right)\right)=\frac{1}{\left|\mathbb{C}\right|}.\label{eq:Hash distribution}
\end{equation}
That is, in \eqref{eq:T gener}, the output of the function $\textrm{Hash}$,
which is also the input of $g$, follows a uniform distribution.

In PoVD, miner $i$ iteratively computes the square of $z_{h,i}$
for $T_{h,i}$ times in $\mathbb{G}$ to derive $y_{h,i}$ by the
fast power algorithm, yielding the VDF result:
\begin{align}
y_{h,i}=z_{h,i}^{2^{T_{h,i}}}= & \left(\textrm{Hash}\left(y_{h-1},\textrm{ID}_{i},m_{h,i}\right)\right)^{2^{g\left(\textrm{Hash}\left(y_{h-1},\textrm{ID}_{i}\right)\right)}}.
\end{align}
 Miner $i$ must complete $T_{h,i}$ rounds' computation to generate
a valid VDF result (if we assume that each miner computes one square
per round). Hence, $T_{h,i}$ is the time to generate a new block,
i.e., the so-called verifiable delay in PoVD. The function $g$ can
control the distribution of $T_{h,i}$, i.e., the block time distribution,
given the uniformly distributed Hash output shown by \eqref{eq:Hash distribution}.
That is how PoVD adjusts the block time distribution. 

Upon obtaining the VDF result $y_{h,i}$, miner $i$ generates the
proof $\pi_{h,i}$ according to the following steps:
\begin{enumerate}
\item Determine a factor $v$ of the decomposed exponent. A predetermined
rule is used to select one of the factors of the exponent. For example,
$v$ can be the prime indexed by $\textrm{Hash}\left(y_{h,i},z_{h,i},T_{h,i}\right)$
within the set $\textrm{Primes}(\kappa)$.
\item Find another exponent $u$ satisfying $2^{T_{h,i}}=uv+\varepsilon$,
where $0\leq\varepsilon<v$.
\item Generate the proof $\pi_{h,i}=z_{h,i}^{u}$.
\end{enumerate}
Usually, the time to generate the proof is negligible. Once miner
$i$ completes these steps, it can directly package the computed results
into a block and broadcast the block to the network. The consensus
process guarantees that miner $i$ must experience the delay $T_{h,i}$
with a verifiable proof. That is why the consensus protocol is called
proof of verifiable delay.

By comparing Fig. \ref{fig:VDF-based consensus protocol} and Fig.
\ref{fig:Workflow of the PoW}, we can find the similarities and differences
between PoW and PoVD. Essentially, both PoW and PoVD are races among
miners. Both protocols require miners to generate a new block by finding
a solution satisfying certain conditions. Such a process automatically
embeds a period of time into the block generation process to reduce
the probability that multiple new blocks are generated simultaneously,
which would result in forks. In PoW, a miner can perform hash trials
in parallel to increase the winning probability. As a result, PoW
consumes immense computational power and often requires specific hardware
such as mining rigs. In PoVD, miners cannot gain extra payoff by investing
more in computational resources since the time to generate a new block
cannot be accelerated by parallel computing. Hence, PoVD is more energy
efficient than PoW since the race on computational resources becomes
meaningless. 

\subsection{PoVD Verification}

In this subsection, we show the verification rule of PoVD. As shown
in Fig. \ref{fig:Blockchain Structure.}, the PoVD verification requires
the following elements: the miner $\textrm{ID}$, the calculation
input of the current block $z_{h}$, the time parameter $T_{h}$,
the proof $\pi_{h}$, the calculation result $y_{h}$ and the calculation
result of the previous block $y_{h-1}$. If any of them is absent,
the block should be considered illegal. Except for $z_{h}$ and $T_{h}$,
all other elements can be directly obtained from the block. Every
miner can derive $z_{h}$ and $T_{h}$ through \eqref{eq:x gener}
and \eqref{eq:T gener}, respectively. 

Based on these elements, miners can further calculate $v$ and $\varepsilon$.
As mentioned in Section \ref{subsec:PoVD-Mining}, the method for
generating $v$ is predetermined and known to all miners. Subsequently,
miners derive $\varepsilon$ by performing the modulus operation on
$v$ with $2^{T}$. Finally, they calculate $y_{h}^{*}=\pi^{v}z^{\mathcal{\varepsilon}}$
and compare it with $y_{h}$. If the two values match, the block passes
the verification.

We also need to verify the chain consistency. The consistency verification
requires the previous block's computation result to be included in
the currently received block's header, denoted as $y_{h-1}^{*}$.
This value is also in the previous block's header as $y_{h-1}$. If
these two values are the same, the block passes the consistency check.
From the description in the above section, $y_{h-1}$ is derived from
$z_{h-1}$, and the generation of $z_{h-1}$ depends on the previous
block. Therefore, the consistency of $y_{h-1}$ directly determines
whether the data between consecutive blocks is consistent. As shown
in Fig. \ref{fig:Blockchain Structure.}, blocks are connected through
the VDF result $y_{h}$. 

\subsection{Discussions on Security and Efficiency}

Remark that most VDF-based consensus protocols face the dilemma between
energy efficiency and security. For instance, R3V \cite{Raikwar2021}
restricts each miner to compute the VDF with a unique input. In this
way, R3V can prevent parallel computation and reduce energy consumption;
however, such design does not involve the block payload data or its
digest, i.e., the block hash, in the VDF input so that R3V cannot
protect the on-chain information from being tampered with. In contrast,
\cite{Long2019,Han2020,Mirkin2024} include the block hash as one
of the VDF inputs to protect the payload data, but miners can revise
the block payload data and create multiple different digests as the
VDF inputs by, e.g., simply exchanging the order of transactions,
for parallel VDF computing to obtain extra mining profits. It now
seems that the two goals, efficiency and security, cannot be achieved
simultaneously.

To address this dilemma, PoVD gives an alternative design. To guarantee
the blockchain integrity, PoVD involves the block payload data $m_{h,i}$,
represented by the hash value, into the VDF computation. But modifying
the payload data $m_{h,i}$ does not affect the time $T_{h,i}$ for
a miner to obtain the VDF output, since the verifiable delay $T_{h,i}$
solely depends on the previous VDF output $y_{h-1}$ and the miner
$\textrm{ID}_{i}$. Miners can generate multiple blocks by parallel
computing; however, all of these blocks require the time $T_{h,i}$
to be generated. That is, miners cannot accelerate block generation
by increasing computational power. 

Note that PoVD involves the miner's identity into the VDF computation,
just like most VDF-based consensus protocols. It is inevitable since
the mining process must associate with an identity. However, if a
miner can create multiple identities, i.e., launch a Sybil attack,
it can increase the chance to be the first one to generate a new block.
To address this, resource-based Sybil resistance mechanisms are typically
employed to make identity creation prohibitively costly \cite{Iqbal2021,Xiao2020survey}.
One typical solution is to introduce a registration process for newly-joined
miners. For example, in protocols such as Algorand \cite{Gilad2017}
and OmniLedger \cite{Kokoris2018}, newly-joined miners are required
to stake some funds and sign a registration transaction with their
private key. The miner is allowed to participate in the consensus
process only if the registration transaction is confirmed by the network.
Other existing miners can verify the identity of the block proposer
using the public key. In these solutions, creating multiple identities
incurs both time and monetary costs. These existing blockchain projects
based on Algorand and OmniLedger illustrate that the registration
process is an effective method against Sybil attacks.

Anyway, one miner can still generate more than one block at the same
height, to split the network and compromise the ledger consistency.
Specifically, a malicious miner may broadcast multiple blocks at the
same height to different peers, referred to as equivocation, to increase
the chance that one of them gets accepted into the main chain. This
issue can be addressed by using Casper, a stake-based finality mechanism
\cite{Buterin2017}. In Casper, only one block can be finalized, so
mining multiple blocks at the same height becomes unprofitable. Moreover,
Casper's voting process can identify a miner's misbehavior of broadcasting
multiple blocks at the same height, and such equivocation will be
detected and penalized via slashing. Meanwhile, a malicious miner
may still hold the generated blocks and release them strategically
to launch a long-range double-spending attack by maintaining a private
branch. This attack can be defended by the finalization mechanism
based on validator voting. In Casper, a checkpoint becomes finalized
after it has received sufficient validator support and its validity
has been further confirmed in the following stage. Therefore, once
the checkpoint is finalized, the malicious competing branch can no
longer replace it. Hence, even a long-range private fork revealed
later cannot revert finalized checkpoints, which greatly enhances
the difficulty of such manipulation.

Besides the above attacks, the attackers may launch more advanced
attacks such as the grinding attack. Miner $i$ may still mine multiple
candidate blocks with different payload data $m_{h,i}$ and find $y_{h,i}^{*}$
to minimize the block generation time for the next block $T_{h+1,i}$.
If miner $i$'s block is successfully included in the main chain with
such a carefully chosen $y_{h,i}^{*}$, then it may take advantage
of winning in the next height since it chooses a short $T_{h+1,i}$.
To mitigate this attack, we can use a binding commitment mechanism
\cite{Choi2023,Syta2017}, under which each miner must broadcast a
claim to bind the digest of the payload data before starting the mining
process, i.e., before calculating the VDF. When other miners receive
the block corresponding to the previously received claim, they firstly
validate that the block contains the same payload data as the claim,
so that grinding by changing the payload after mining is not applicable.
Besides, they perform a local timing check, which is to validate that
the duration between receiving the claim and the corresponding block
must be larger than $T_{h,i}$. This mechanism ensures that the claim
is generated before the VDF calculation begins. Otherwise, if a miner
attempts to launch a grinding attack and broadcasts the claim after
completing the VDF computation, it must wait for roughly another $T_{h,i}$,
which greatly reduces the benefit of such an attack. Only when both
validations pass will the verifier continue the subsequent process,
otherwise this block will be considered invalid locally.

\section{\label{sec:Protocol Modeling}Performance Analysis}

\subsection{Rethink the Mining Model}

As we have shown in \eqref{eq:T gener}, PoVD can flexibly adjust
the block time distribution by setting the function $g$ properly.
Note that most PoW-style consensus protocols can only set the average
block time by adjusting mining difficulty, while the block time distribution
always obeys a Bernoulli distribution. It can be viewed as an extra
advantage of PoVD beyond its communication- and computation-efficiency.
However, the mining process becomes more difficult to model under
an adjustable block time distribution. We have to rethink the mining
model and construct a more general mining process model with arbitrary
block time distributions before quantifying PoVD's improvement.

According to the PoVD protocol, each miner's time parameter $T_{h,i}$
is independent, identically distributed, and is determined by the
function $g$ in \eqref{eq:T gener}. Therefore, we drop the subscript
indicating the block height and miner identity and define the random
variable $T$ to represent the number of rounds a miner needs to generate
a new block. The probability mass function (PMF) of $T$ is denoted
as $\textrm{Pr}\left(T=r\right)$. Furthermore, we introduce $\tilde{p}\left(r\right)$
to represent the probability that a miner generates a block in round
$r$ after the previous block has been generated. Notably, $\tilde{p}\left(0\right)=0$.
The PMF of $T$ is expressed as 
\begin{equation}
\textrm{Pr}\left(T=r\right)=\tilde{p}\left(r\right)\prod_{k=0}^{r-1}\left(1-\tilde{p}\left(k\right)\right),\:\textrm{for}\:r=0,1,...\label{eq:T and pr}
\end{equation}
The meaning of \eqref{eq:T and pr} is clear. A miner successfully
mining a new block in round $r$ means that it does not generate any
block in the first $r-1$ rounds, but succeeds in round $r$. 

If we would like to let $T$ follow a geometric distribution like
PoW, we can set the function $g$ as follows:
\begin{align}
g\left(x\right) & =r,\:\textrm{if}\:\left(1-p\right)^{r}\leq\frac{x}{\left|\mathbb{C}\right|}<\left(1-p\right)^{r-1},\label{eq:deltaT-1-1}
\end{align}
where $0<p<1$, $x=0,1,...,\left|\mathbb{C}\right|$, and $r=1,2,...$
In \eqref{eq:deltaT-1-1}, as $r$ increases, the range of $x$ that
satisfies the corresponding inequalities decreases geometrically.
Since the input $x$ is uniformly distributed over $\mathbb{C}$ by
the $\textrm{Hash}$ function, the resulting distribution of $T$
obeys:\setcounter{equation}{14}
\begin{figure*}
\begin{equation}
\mathsf{F}\left(\tilde{p}\left(r\right)\right)=1-\sum_{r=1}^{\infty}\prod_{k=0}^{r-1}\prod_{j=1}^{d-1}n\tilde{p}\left(r\right)\left(1-\tilde{p}\left(r\right)\right)^{n-1}\left(1-\tilde{p}\left(k\right)\right)^{n}\left(1-\tilde{p}\left(r+j\right)\right)^{n\left(1-w_{j}\right)}.\label{eq:fork rate}
\end{equation}
\rule[0.5ex]{2.04\columnwidth}{0.5pt}
\end{figure*}
\setcounter{equation}{9}
\begin{align}
\textrm{Pr}\left(T=r\right) & =\left(1-p\right)^{r-1}-\left(1-p\right)^{r}\nonumber \\
 & =p\left(1-p\right)^{r-1},\:\textrm{for}\:r=1,2,...\label{eq:g-r}
\end{align}
One can easily obtain the block time distribution $\tilde{p}\left(r\right)=p$
for all $r=1,2,...$ In other words, the probability that a miner
mines a new block is identical for every round, which aligns with
the property of PoW. 

In the following analysis, we focus on the block time distribution
$\tilde{p}\left(r\right)$, since the random variable $T$ can be
easily characterized by $\tilde{p}\left(r\right)$. Let the random
variable $U_{r}$ be the number of blocks generated in round $r$
with no blocks generated in the previous $r-1$ rounds. Then, we have
\begin{align}
\textrm{Pr}\left(U_{r}=1\right)= & n\tilde{p}\left(r\right)\left(1-\tilde{p}\left(r\right)\right)^{n-1}\prod_{k=0}^{r-1}\left(1-\tilde{p}\left(k\right)\right)^{n},\label{eq:Ur=00003D1}\\
\textrm{Pr}\left(U_{r}\geq1\right)= & \left(1-\left(1-\tilde{p}\left(r\right)\right)^{n}\right)\prod_{k=0}^{r-1}\left(1-\tilde{p}\left(k\right)\right)^{n},\label{eq:Ur>=00003D1}
\end{align}
which represent the probability that exactly one block is generated
and the probability that at least one block is generated, respectively.
These two equations share a common term $\prod_{k=0}^{r-1}\left(1-\tilde{p}\left(k\right)\right)^{n}$,
which represents the probability that no block is generated in the
first $r-1$ rounds. Equations \eqref{eq:Ur=00003D1} and \eqref{eq:Ur>=00003D1}
are the basis of the blockchain performance analysis regarding fork
rate and block time under a given distribution $\tilde{p}\left(r\right)$.

\subsection{Fork Rate and Block Time}

We first derive the fork rate. Assume that the network is in a consistent
state in round 0, i.e., every miner reaches an agreement on the main
chain. We consider the event $A_{r}$, where only one block is generated
in round $r$, and no additional blocks are generated by any other
miners in the subsequent $d$ rounds during its propagation process.
The probability of the event $A_{r}$ is given by
\begin{equation}
\textrm{Pr}\left(A_{r}\right)=\textrm{Pr}\left(U_{r}=1\right)\prod_{j=1}^{d-1}\left(1-\tilde{p}\left(r+j\right)\right)^{n\left(1-w_{j}\right)}.\label{eq:PAR}
\end{equation}
We can use the event $A=\bigcup_{r=1}^{\infty}A_{r}$ to describe
that only one block is generated from a consistent state in round
0. Note that the complement of event $A$ means at least one fork
occurs, and thus, the fork rate $\mathsf{F}\left(\tilde{p}\left(r\right)\right)$
can be expressed as
\begin{equation}
\mathsf{F}\left(\tilde{p}\left(r\right)\right)=1-\textrm{Pr}\left(\bigcup_{r=1}^{\infty}A_{r}\right)\overset{\left(a\right)}{=}1-\sum_{r=1}^{\infty}\textrm{Pr}\left(A_{r}\right),\label{eq:fork rate simple}
\end{equation}
\setcounter{equation}{15}where (a) is because every $A_{r}$ is exclusive.
By substituting \eqref{eq:Ur=00003D1} and \eqref{eq:PAR} into \eqref{eq:fork rate simple},
we obtain the expression for the fork rate in \eqref{eq:fork rate}
with an arbitrary block time distribution $\tilde{p}\left(r\right)$.

According to \eqref{eq:fork rate}, the fork rate is composed of three
components. The term $\prod_{j=1}^{d-1}\left(1-\tilde{p}\left(r+j\right)\right)^{n\left(1-w_{j}\right)}$
represents the impact of network propagation, $\prod_{k=0}^{r-1}\left(1-\tilde{p}\left(k\right)\right)^{n}$
is the probability that no block is generated in the first $r-1$
rounds, and $n\tilde{p}\left(r\right)\left(1-\tilde{p}\left(r\right)\right)^{n-1}$
corresponds to the event where only one miner generates a block in
round $r$. The combination of these three parts indicates that the
network has no fork. 

The fork rate essentially reflects the consistency of the blockchain.
Forks mean that different miners have different visions on the main
chain. Therefore, the fork rate \eqref{eq:fork rate} is an important
quantitative metric of blockchain's consistency describing the fundamental
property of distributed systems. The lower the fork rate, the more
consistent the blockchain is. 

Block time is the expected number of rounds between two consecutive
blocks. The expected number of rounds for block generation can be
obtained from \eqref{eq:Ur>=00003D1}, yielding the (average) block
time $\mathsf{B}\left(\tilde{p}\left(r\right)\right)$:
\begin{align}
\mathsf{B}\left(\tilde{p}\left(r\right)\right) & =\sum_{r=1}^{\infty}r\textrm{Pr}\left(U_{r}\geq1\right)\nonumber \\
 & =\sum_{r=1}^{\infty}\prod_{k=0}^{r-1}r\left(1-\left(1-\tilde{p}\left(r\right)\right)^{n}\right)\left(1-\tilde{p}\left(k\right)\right)^{n}.\label{eq:average blocktime}
\end{align}
Notably, our derivation is based on the probability that at least
one block is generated, i.e., $\textrm{Pr}\left(U_{r}\geq1\right)$.
Hence, the inverse of block time accurately reflects the main chain's
growth rate \cite{Fujihara2024}. The smaller the block time, the
more transactions the blockchain can process per unit of time, resulting
in higher throughput. Therefore, block time in \eqref{eq:average blocktime}
can be viewed as a critical metric for quantifying the throughput
of a blockchain. 

In conclusion, we rebuild a blockchain mining model with an arbitrary
block time distribution and derive the fork rate in \eqref{eq:fork rate}
and block time in \eqref{eq:average blocktime} in closed forms. Note
that the above modeling and analysis are derived from a consistent
state. The obtained fork rate and block time are accurate under this
assumption and provide good approximations even without the assumption,
as shown in the experimental results. 

\subsection{PoW as an Example}

Note that our mining model also works for PoW. Given the mining difficulty
and the number of hash trials in a round, the probability that a miner
generates a block in a round in PoW is constant. Hence, $T$ can be
modeled as a geometric distribution, and the block time distribution
of PoW can be denoted as $\tilde{p}_{a}\left(r\right)=p$, where $r=1,2,...$
By substituting $\tilde{p}_{a}\left(r\right)=p$ and \eqref{eq:network value}
into \eqref{eq:fork rate}, we can derive the fork rate $\mathsf{F}\left(\tilde{p}_{a}\left(r\right)\right)$
for PoW as follows:
\begin{align}
\mathsf{F}\left(\tilde{p}_{a}\left(r\right)\right) & =1-\frac{np\left(1-p\right)^{n\left(d-\sum_{i=0}^{d-1}w_{i}\right)}}{1-\left(1-p\right)^{n}}\nonumber \\
 & =1-\frac{np\left(1-p\right)^{nd\left(1-c\right)}}{1-\left(1-p\right)^{n}}.\label{eq:pow forkrate}
\end{align}
In fact, \eqref{eq:pow forkrate} is a conditional probability. The
denominator, $1-\left(1-p\right)^{n}$, represents the probability
that blocks are generated in a given round. The term $np\left(1-p\right)^{nd\left(1-c\right)}$
represents the probability that only one block is generated, and the
exponent $nd\left(1-c\right)$ represents the number of miners who
have not received the block during the $d$ rounds propagation process.
In \cite{Decker2013}, the authors obtained similar results by calculating
the probability of no block being generated during propagation.

Similarly, by substituting $\tilde{p}_{a}\left(r\right)=p$ into \eqref{eq:average blocktime},
we can derive the average block time of PoW $\mathsf{B}\left(\tilde{p}_{a}\left(r\right)\right)$
as follows:
\begin{align}
\mathsf{B}\left(\tilde{p}_{a}\left(r\right)\right)= & \sum_{r=1}^{\infty}r\left(1-\left(1-p\right)^{n}\right)\left(1-p\right)^{n\left(r-1\right)}\nonumber \\
= & \frac{1}{1-\left(1-p\right)^{n}}.\label{eq:PoW average blocktime}
\end{align}
Note that, in PoW, the probability that at least one block is generated
within a round is $1-\left(1-p\right)^{n}$. As a result, the time
to generate new blocks can also be modeled as a geometric distribution,
with an expectation of $\frac{1}{1-\left(1-p\right)^{n}}$, which
is exactly the block time of PoW. As shown in \eqref{eq:PoW average blocktime},
increasing either $p$ or $n$ reduces the block time. This is because
a higher $p$ or $n$ increases the probability that miners mine a
new block in each round, which reduces the number of rounds to generate
a new block and thus lowers the block time. As one can see, both $\mathsf{F}\left(\tilde{p}_{a}\left(r\right)\right)$
and $\mathsf{B}\left(\tilde{p}_{a}\left(r\right)\right)$ of PoW are
determined by the parameters of $p$ and $n$. Meanwhile, PoVD can
flexibly adjust the fork rate and block time by setting the block
time distribution $\tilde{p}\left(r\right)$ via the function $g$,
which is another significant advantage of PoVD.

\section{\label{sec:Protocol Analysis}PoVD Block Time Re-Distribution}

\subsection{$\delta$-spaced Block Time Distribution}

The fork rate and block time are two key metrics of blockchains, serving
as quantitative indicators of consistency and throughput, respectively.
A lower fork rate reflects higher consistency, while a shorter block
time indicates higher throughput. However, although reducing block
time can improve throughput, it also increases the probability that
more than one block is generated during the propagation of another
block, leading to a higher fork rate. Therefore, we should balance
the trade-off between fork rate and block time in the blockchain design.

Therefore, can we reduce the fork rate $\mathsf{F}\left(\tilde{p}\left(r\right)\right)$
with the same block time $\mathsf{B}\left(\tilde{p}\left(r\right)\right)$,
by adjusting the distribution $\tilde{p}\left(r\right)$? Due to the
complicated expressions of $\mathsf{F}\left(\tilde{p}\left(r\right)\right)$
and $\mathsf{B}\left(\tilde{p}\left(r\right)\right)$, this is a very
challenging problem. So far, the optimal block time distribution to
minimize the fork rate under the block time constraint remains open.
However, we would like to provide the $\delta$-spaced block time
distribution, denoted by $\tilde{p}_{\delta}\left(r\right)$, that
can achieve a lower fork rate than PoW. The $\delta$-spaced block
time distribution $\tilde{p}_{\delta}\left(r\right)$ is given by
\begin{equation}
\tilde{p}_{\delta}\left(r\right)=\begin{cases}
p_{\delta}, & r=1,\delta+1,2\delta+1,...\\
0, & \mathrm{otherwise},
\end{cases}\label{eq:expression of p}
\end{equation}
where $\delta$ is a positive integer. Essentially, under the $\delta$-spaced
block time distribution $\tilde{p}_{\delta}\left(r\right)$, a miner
can only mine a block every $\delta$ rounds. In this case, we can
set the function $g$ to be
\begin{align}
g\left(x\right) & =r,\:\textrm{if}\:\left(1-p_{\delta}\right)^{\frac{r+\delta-1}{\delta}}\leq\frac{x}{\left|\mathbb{C}\right|}<\left(1-p_{\delta}\right)^{\frac{r-1}{\delta}},\label{eq:deltaT-1}
\end{align}
where $r=1,\delta+1,2\delta+1,...,$ and $0\leq x\leq\left|\mathbb{C}\right|$.

In a network with the maximum propagation delay $d$, the $\delta$-spaced
block time distribution $\tilde{p}_{\delta}\left(r\right)$ with $\delta=d$
yields the fork rate $\mathsf{F}\left(\tilde{p}_{\delta}\left(r\right)\right)$
and the block time $\mathsf{B}\left(\tilde{p}_{\delta}\left(r\right)\right)$,
given by
\begin{align}
\mathsf{F}\left(\tilde{p}_{\delta}\left(r\right)\right)= & 1-\frac{np_{\delta}\left(1-p_{\delta}\right)^{n-1}}{1-\left(1-p_{\delta}\right)^{n}},\label{eq:VDF forkrate}\\
\mathsf{B}\left(\tilde{p}_{\delta}\left(r\right)\right)= & \frac{\delta\left(1-p_{\delta}\right)^{n}}{1-\left(1-p_{\delta}\right)^{n}}+1.\label{eq:VDF average blocktime}
\end{align}

\subsection{Superiority of PoVD}

To further compare PoVD with PoW under the same average block time,
we give two lemmas that will be used to bound the threshold of the
network capability indicator.
\begin{lem}
\label{lem:max n}Let $q\in\left(0,1\right)$ and $d=1,2...,$ be
constants. Then, for any $n=1,2,...$, the following inequality holds:
\begin{equation}
\frac{\ln\omega\left(n\right)/d}{\ln\left(1-q\right)}-\frac{1}{n}<\frac{\ln\left(\frac{1}{d}-\frac{\ln\left(1+dq-q\right)}{d\ln\left(1-q\right)}\right)}{\ln\left(1-q\right)}.\label{eq:max n}
\end{equation}
\end{lem}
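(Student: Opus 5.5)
Since $\ln(1-q)<0$, multiplying \eqref{eq:max n} by $\ln(1-q)$ reverses it. Using $\frac{1}{n}\ln(1-q)=\ln(1-q)^{1/n}$ and exponentiating, I would first rewrite the claim in the equivalent multiplicative form
\begin{equation*}
\Phi(n):=\frac{\omega(n)}{(1-q)^{1/n}}>1-\frac{\ln\left(1+(d-1)q\right)}{\ln(1-q)}=:L(q,d),
\end{equation*}
where the factor $1/d$ has been cancelled on both sides. Here $\omega(n)$ is the quantity defined just before the lemma (in the paper it should come from equating the PoW block time \eqref{eq:PoW average blocktime} with the PoVD block time \eqref{eq:VDF average blocktime} at $p_{\delta}=q$, $\delta=d$).

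This is a plan written without the definition of $\omega(n)$ in front of me. If $\omega(n)$ is the exponent comparison implied by equal block times, I would use the relation obtained from \eqref{eq:PoW average blocktime} and \eqref{eq:VDF average blocktime}. With $x=(1-q)^{n}$, equal block times give
\begin{equation*}
(1-p)^{n}=\frac{dx}{1+(d-1)x},
\end{equation*}
so $\omega(n)$ becomes an explicit elementary function of $x\in(0,1)$ and $d$.

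The structure of $L(q,d)$ suggests a two-step argument.
\begin{enumerate}
\item \textbf{Monotonicity.} Treat $n$ as a continuous variable $t\ge 1$, or equivalently work in $x=(1-q)^{t}\in(0,1-q]$, and show that $\Phi$ is monotone in $t$.
\item \textbf{Extreme value.} Identify the infimum of $\Phi$ over $t\ge 1$ by evaluating either $t=1$ or the limit $t\to\infty$, and check that it equals $L(q,d)$. The $\ln(1+(d-1)q)$ term is exactly what a first-order expansion like $\ln\left(1+(d-1)x\right)\approx(d-1)x$ or $\ln(1-q)^{1/n}$ would produce, so I expect the extreme value to come from matching these logarithms.
\end{enumerate}

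For the monotonicity step, I would differentiate $\ln\Phi(t)=\ln\omega(t)-\frac{1}{t}\ln(1-q)$. Its derivative is
\begin{equation*}
\frac{\omega'(t)}{\omega(t)}+\frac{\ln(1-q)}{t^{2}}.
\end{equation*}
I would then show this has a constant sign using elementary inequalities for $x\in(0,1)$: $\ln(1+y)\le y$, $\ln(1+y)\ge \frac{y}{1+y}$, and convexity of $t\mapsto(1-q)^{t}$. If the sign is awkward in $t$, I would change variables to $s=1/t\in(0,1]$, where $(1-q)^{1/n}$ becomes a simple exponential and the comparison may reduce to concavity of a one-variable function.

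Strictness should come from the extreme value not being attained at an interior point. Either the infimum is only approached as $n\to\infty$, or the derivative is strictly signed, so that equality at the boundary $n=1$ is excluded for $n\ge 2$ and checked directly for $n=1$.

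\textbf{Main obstacle.} The hard step is the sign of the derivative uniformly in $q\in(0,1)$ and $d\ge 1$, because of the mixed term $\frac{1}{t^{2}}\ln(1-q)$. My first attempt would be to bound $\omega'/\omega$ from below by a multiple of $-\ln(1-q)\,x/(1+(d-1)x)$ and compare it with $-\ln(1-q)/t^{2}$ using $x t^{2}\le$ a constant depending only on $q$. Failing that, I would prove the inequality directly at each integer $n$ by induction on $n$, comparing $\Phi(n+1)$ with $\Phi(n)$.
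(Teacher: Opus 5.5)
Your multiplicative reformulation $\Phi(n)>L(q,d)$ is correct, and your skeleton (treat $n$ as continuous, prove monotonicity, pass to an extreme value) is the same as the paper's. The proposal still does not prove the lemma, because everything hinges on $\omega(n)$, and you neither use its definition nor reconstruct it correctly. In the paper, $q=1-(1-p)^{n}$ is the per-round PoW success probability, not $p_{\delta}$. Substituting the equal-block-time relation into the fork-rate comparison gives
\[
\omega(n)=\frac{a^{1/n}-1}{b^{1/n}-1},\qquad a=\frac{1+(d-1)q}{1-q},\quad b=\frac{1}{1-q}.
\]
As a result, the decisive step is left open, and your diagnosis of it is wrong. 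Your first attempt lower-bounds $\omega'/\omega$ by a positive quantity so that it beats $-\ln(1-q)/t^{2}$. In fact $\omega'\le 0$ (strictly for $d\ge 2$), so the two terms of $\frac{\omega'(t)}{\omega(t)}+\frac{\ln(1-q)}{t^{2}}$ have the same sign and there is no competition to resolve. $\Phi$ is decreasing, $\Phi(1)=d/(1-q)$ is its maximum on $t\ge 1$, and the bound must come from $t\to\infty$. The estimate you planned would therefore fail.

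The missing idea is the monotonicity of $\omega$ itself. Put $s=1/n$ and $A=\ln a\ge B=\ln b>0$, so that $\omega=\frac{e^{sA}-1}{e^{sB}-1}$. The derivative of $\ln\omega$ with respect to $s$ equals $\frac{1}{s}\bigl(G(sA)-G(sB)\bigr)$, where $G(x)=\frac{x}{1-e^{-x}}$ is increasing because $e^{x}>1+x$. Hence $\omega(n)\ge\lim_{s\to0^{+}}\omega=A/B=1-\frac{\ln(1+(d-1)q)}{\ln(1-q)}=L(q,d)$. In your form the lemma then follows in one line: $\Phi(n)=\omega(n)(1-q)^{-1/n}>\omega(n)\ge L(q,d)$, with strictness from $(1-q)^{-1/n}>1$. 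This also covers $d=1$, where $\omega\equiv 1$. The $-1/n$ term (equivalently, the factor $(1-q)^{-1/n}$) only helps, so no balancing is needed.

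Do not copy the paper's justification of $\omega'(n)<0$. From $f_1>f_2>0$ and $f_1'<0$ one gets $f_2f_1'>f_1f_1'$, not $<$, so that displayed step runs the wrong way even though its conclusion is true. The rest of the paper's argument, namely that the left side increases in $n$ and tends to the right side via $e^{x}-1\sim x$, coincides with the corrected route above.
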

\begin{proof}
We first prove $\omega^{\prime}\left(n\right)<0$. Let $f_{1}\left(n\right)=\left(\frac{1+dq-q}{1-q}\right)^{\frac{1}{n}}-1$,
which is the numerator of $\omega\left(n\right)$, and $f_{2}\left(n\right)=\left(\frac{1}{1-q}\right)^{\frac{1}{n}}-1$,
which is the denominator. Note that for any $q\in\left(0,1\right)$,
we have $\frac{1+dq-q}{1-q}>\frac{1}{1-q}>1$, and we further have
$f_{1}\left(n\right)>f_{2}\left(n\right)>0$ and $f_{1}^{\prime}\left(n\right)<f_{2}^{\prime}\left(n\right)<0$.
Hence, the derivative of $\omega\left(n\right)$, i.e., $\omega^{\prime}\left(n\right)$,
is
\begin{align}
\omega^{\prime}\left(n\right) & =\frac{1}{\left(f_{2}\left(n\right)\right)^{2}}\left(f_{2}\left(n\right)f_{1}^{\prime}\left(n\right)-f_{1}\left(n\right)f_{2}^{\prime}\left(n\right)\right)\nonumber \\
 & <\frac{1}{\left(f_{2}\left(n\right)\right)^{2}}\left(f_{1}\left(n\right)f_{1}^{\prime}\left(n\right)-f_{1}\left(n\right)f_{2}^{\prime}\left(n\right)\right)\nonumber \\
 & =\frac{f_{1}\left(n\right)}{\left(f_{2}\left(n\right)\right)^{2}}\left(f_{1}^{\prime}\left(n\right)-f_{2}^{\prime}\left(n\right)\right)<0.
\end{align}
Therefore, the left side of \eqref{eq:max n} increases in $n$ since
\begin{equation}
\frac{\partial}{\partial n}\left(\frac{\ln\omega\left(n\right)/d}{\ln\left(1-q\right)}-\frac{1}{n}\right)=\frac{\omega^{\prime}\left(n\right)}{\omega\left(n\right)\ln\left(1-q\right)}+\frac{1}{n^{2}}>0,
\end{equation}
where $\omega\left(n\right)>0$ and $\ln\left(1-q\right)<0$. Finally,
we can prove \eqref{eq:max n}:
\begin{align}
 & \frac{\ln\omega\left(n\right)/d}{\ln\left(1-q\right)}-\frac{1}{n}<\underset{n\rightarrow\infty}{\lim}\left(\frac{\ln\omega\left(n\right)/d}{\ln\left(1-q\right)}-\frac{1}{n}\right)\nonumber \\
 & \overset{\left(a\right)}{=}\frac{1}{\ln\left(1-q\right)}\ln\frac{1}{d}\left(\underset{n\rightarrow\infty}{\lim}\frac{\exp\left(\frac{1}{n}\ln\left(\frac{1+dq-q}{1-q}\right)\right)-1}{\exp\left(\frac{1}{n}\ln\left(\frac{1}{1-q}\right)\right)-1}\right)\nonumber \\
 & \overset{\left(b\right)}{=}\frac{1}{\ln\left(1-q\right)}\ln\frac{1}{d}\left(\underset{n\rightarrow\infty}{\lim}\frac{\frac{1}{n}\ln\left(\frac{1+dq-q}{1-q}\right)}{\frac{1}{n}\ln\left(\frac{1}{1-q}\right)}\right)\nonumber \\
 & \overset{\left(c\right)}{=}\frac{1}{\ln\left(1-q\right)}\ln\left(\frac{1}{d}-\frac{\ln\left(1+dq-q\right)}{d\ln\left(1-q\right)}\right),
\end{align}
where (a) is because $\lim_{n\rightarrow\infty}\frac{1}{n}=0$ and
rewrites the formula, (b) is based on $\exp\left(x\right)-1\thicksim x$
as $x\rightarrow0$, and (c) simplifies the formula. 
\end{proof}
\begin{lem}
\label{lem:hu}Let $d=1,2...,$ be a constant. Then, for any $q\in\left(0,1\right)$,
the following holds:
\begin{equation}
\frac{\ln\left(\frac{1}{d}-\frac{\ln\left(1+dq-q\right)}{d\ln\left(1-q\right)}\right)}{\ln\left(1-q\right)}<\frac{d}{2}-\frac{1}{2}.\label{eq:hu}
\end{equation}
\end{lem}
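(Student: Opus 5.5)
The plan is to strip away the logarithms and reduce the claim to an elementary one-variable inequality. Put $L=\ln\left(1-q\right)<0$ and
\[
a=\frac{\ln\left(1+\left(d-1\right)q\right)}{-\ln\left(1-q\right)}\geq0,
\]
so the argument of the outer logarithm in \eqref{eq:hu} is $\frac{1+a}{d}$. Multiplying \eqref{eq:hu} by $L<0$ flips the inequality and gives the equivalent form
\[
\frac{1+a}{d}>\left(1-q\right)^{\frac{d-1}{2}}.
\]
For $d=1$ both sides of \eqref{eq:hu} equal $0$, so the strict inequality degenerates to an equality there. I will note this and prove the statement for $d\geq2$, which is the case needed when comparing with PoW.

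Next I substitute $t=-\ln\left(1-q\right)\in\left(0,\infty\right)$, so that $q=1-e^{-t}$. After multiplying by $t>0$, the target becomes
\[
\phi\left(t\right):=t+\ln\left(1+\left(d-1\right)\left(1-e^{-t}\right)\right)-d\,t\,e^{-\frac{\left(d-1\right)t}{2}}>0,\quad t>0.
\]
Clearly $\phi\left(0\right)=0$, and both sides of the reduced inequality tend to $d$ as $t\rightarrow0$. So the inequality is tight at the boundary and must be obtained from growth, not from a crude bound. I checked one crude route: bounding $a$ below by the endpoint value of the integrand ratio $\frac{\left(d-1\right)\left(1-s\right)}{1+\left(d-1\right)s}$ in $a=\int_{0}^{q}\frac{d-1}{1+\left(d-1\right)s}ds\big/\int_{0}^{q}\frac{ds}{1-s}$. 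It leads to $\left(1+\left(d-1\right)q\right)\left(1-q\right)^{\frac{d-1}{2}}\leq1$, which already fails for small $q$. I will therefore argue through $\phi$ directly.

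The key step is to show $\phi^{\prime}\left(t\right)>0$ for $t>0$. We have
\[
\phi^{\prime}\left(t\right)=1+\frac{\left(d-1\right)e^{-t}}{1+\left(d-1\right)\left(1-e^{-t}\right)}-d\,e^{-\frac{\left(d-1\right)t}{2}}\left(1-\frac{\left(d-1\right)t}{2}\right).
\]
At $t=0$ this equals $0$ as well, so I expect to need one more level. First I would split off the region $t\geq\frac{2}{d-1}$: there the last term is nonpositive and $\phi^{\prime}\geq1>0$ immediately. On $0<t<\frac{2}{d-1}$ I would bound the subtracted term using $e^{-x}\left(1-x\right)\leq1-2x+\frac{3}{2}x^{2}$ with $x=\frac{\left(d-1\right)t}{2}$. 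I would bound the middle term below via $e^{-t}\geq1-t$ and $1-e^{-t}\leq t$. This should leave a polynomial or rational inequality in $t$ whose linear coefficient is positive. Concretely, the expansion of $\phi^{\prime}$ begins $\left(\frac{d\left(d-1\right)}{1}-\left(d-1\right)-\left(d-1\right)^{2}\right)t+\ldots$, and the leading coefficient must be checked carefully.

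If that linear coefficient turns out to vanish, I would go one order further and take the second derivative: show $\phi^{\prime\prime}>0$ near $0$ together with the region split above. This delicate local analysis near $t=0$, where all terms cancel to leading order, is the step I expect to be the main obstacle. The rest, namely the reduction, the large-$t$ region, and translating $\phi>0$ back into \eqref{eq:hu} by undoing the substitution and the sign flip, is routine.
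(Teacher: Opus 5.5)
Your reduction is correct: with $t=-\ln(1-q)$, \eqref{eq:hu} is equivalent to $\phi(t)>0$ for $t>0$. You are also right that for $d=1$ both sides of \eqref{eq:hu} vanish. So the lemma really needs $d\ge2$, and so does the paper's proof, which uses $h_1'(q)=\frac{d-1}{1+dq-q}>0$.

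The gap is the decisive step $\phi'>0$. You leave it open, and the bounds you propose do not close it.

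First, the coefficient you display equals $(d-1)\left(d-1-(d-1)\right)=0$. In fact $\phi(t)=\frac{d(d-1)(5d-1)}{24}t^{3}+O(t^{4})$, so $\phi'(0)=\phi''(0)=0$, and your second-derivative fallback would itself need a third-order check at the origin.

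Second, and more seriously, put $x=\frac{(d-1)t}{2}$. The three bounds you list ($e^{-t}\ge1-t$, $1-e^{-t}\le t$, $e^{-x}(1-x)\le1-2x+\frac{3}{2}x^{2}$) combine to exactly
\[
\phi'(t)\ge4dx^{2}\left(\frac{1}{1+2x}-\frac{3}{8}\right),
\]
which is positive only for $x<\frac{5}{6}$. On $\frac{5}{3(d-1)}\le t<\frac{2}{d-1}$, neither this bound nor your region split ($t\ge\frac{2}{d-1}$) gives anything. As proposed, the argument therefore does not cover all $t>0$.

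The repair is short and removes both the case split and the local analysis. Do not Taylor-bound the last term. Instead use the exact identity $1+\frac{(d-1)e^{-t}}{1+(d-1)(1-e^{-t})}=\frac{d}{1+(d-1)(1-e^{-t})}$ together with $1-e^{-t}<t$. For every $t>0$ this gives
\[
\phi'(t)>d\left(\frac{1}{1+2x}-e^{-x}(1-x)\right)=\frac{d\left(e^{x}-(1-x)(1+2x)\right)}{(1+2x)e^{x}}>0,
\]
because $(1-x)(1+2x)=1+x-2x^{2}<1+x<e^{x}$ for $x>0$. Together with $\phi(0)=0$, this yields $\phi>0$ on $(0,\infty)$, i.e.\ \eqref{eq:hu} for all $d\ge2$ and $q\in(0,1)$.

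Once completed this way, your route is genuinely different from the paper's. The paper keeps the left-hand side as $h(q)$, argues that $h$ is decreasing on $(0,1)$, and then computes $\lim_{q\to0^{+}}h(q)=\frac{d-1}{2}$ via asymptotic equivalences and L'H\^opital's rule. If carried out, that gives the stronger monotonicity statement, but it rests on the sign of $h'(q)$. There, the quotient rule actually puts $-\left(h_1/h_2\right)'h_2^{2}$ in the numerator, not $-\left(h_1/h_2\right)'h_2^{3}$. With the correct power this term is negative while the other term is positive, so the paper's termwise sign table does not by itself establish $h'<0$. Your approach proves exactly the inequality needed, requires no limit computation, and reduces everything to $e^{x}>1+x$.
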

\begin{proof}
We define the left-hand side of \eqref{eq:hu} as function $h\left(q\right)$:
\begin{align}
h\left(q\right) & =\frac{\ln\frac{1}{d}\left(1-\frac{\ln\left(1+dq-q\right)}{\ln\left(1-q\right)}\right)}{\ln\left(1-q\right)}\nonumber \\
 & =\frac{1}{h_{2}\left(q\right)}\ln\left(\frac{1}{d}-\frac{h_{1}\left(q\right)}{dh_{2}\left(q\right)}\right),
\end{align}
where $h_{1}\left(q\right)=\ln\left(1+dq-q\right),$ and $h_{2}\left(q\right)=\ln\left(1-q\right)$.
Next, we prove that $h\left(q\right)$ decreases in $q$. The derivative
of $h\left(q\right)$ is given by:
\begin{align}
h^{\prime}\left(q\right)= & \frac{1}{\left(h_{2}\left(q\right)-h_{1}\left(q\right)\right)\left(h_{2}\left(q\right)\right)^{2}}\nonumber \\
 & \cdot\left(\left(h_{1}\left(q\right)-h_{2}\left(q\right)\right)h_{2}^{\prime}\left(q\right)\ln\left(\frac{1}{d}-\frac{h_{1}\left(q\right)}{dh_{2}\left(q\right)}\right)\right.\nonumber \\
 & \left.-\left(\frac{h_{1}\left(q\right)}{h_{2}\left(q\right)}\right)^{\prime}\left(h_{2}\left(q\right)\right)^{3}\right).
\end{align}
The denominator of $h^{\prime}\left(q\right)$ is negative since $h_{2}\left(q\right)-h_{1}\left(q\right)=\ln\left(\left(1-q\right)/\left(1+\left(d-1\right)q\right)\right)<\ln1=0$.
We would like to show that the numerator of $h^{\prime}\left(q\right)$
is positive. Note that $h_{1}\left(q\right)$ increases in $q$ (since
$h_{1}^{\prime}\left(q\right)=\frac{d-1}{1+dq-q}>0$) and $h_{2}\left(q\right)$
decreases in $q$ (since $h_{2}^{\prime}\left(q\right)=\frac{1}{q-1}<0$),
so $h_{1}\left(q\right)/h_{2}\left(q\right)$ increases in $q$. Therefore,
we further have $\left(h_{1}\left(q\right)/h_{2}\left(q\right)\right)^{\prime}>0,$
and
\begin{align}
\frac{h_{1}\left(q\right)}{h_{2}\left(q\right)} & >\underset{q\rightarrow0^{+}}{\lim}\frac{h_{1}\left(q\right)}{h_{2}\left(q\right)}=\underset{q\rightarrow0^{+}}{\lim}\frac{\ln\left(1+\left(d-1\right)q\right)}{\ln\left(1-q\right)}\nonumber \\
 & \overset{\left(a\right)}{=}\underset{q\rightarrow0^{+}}{\lim}\frac{\left(d-1\right)q}{-q}=1-d,\label{eq:h1/h2=00003D1}
\end{align}
where (a) is based on $\ln\left(1+x\right)\thicksim x$ as $x\rightarrow0^{+}$.
Hence, we have $\ln\left(\frac{1}{d}-\frac{h_{1}\left(q\right)}{dh_{2}\left(q\right)}\right)<\ln\left(\frac{1}{d}\cdot d\right)=0$.
Now, we can summarize each term of the numerator of $h^{\prime}\left(q\right)$:
\begin{equation}
\begin{cases}
h_{1}\left(q\right)-h_{2}\left(q\right)=\ln\left(\frac{1+\left(d-1\right)q}{1-q}\right)>0,\\
h_{2}^{\prime}\left(q\right)=\frac{1}{q-1}<0,\\
\ln\left(\frac{1}{d}-\frac{h_{1}\left(q\right)}{dh_{2}\left(q\right)}\right)<0,\\
\left(\frac{h_{1}\left(q\right)}{h_{2}\left(q\right)}\right)^{\prime}>0,\\
\left(h_{2}\left(q\right)\right)^{3}=\left(\ln\left(1-q\right)\right)^{3}<0.
\end{cases}
\end{equation}
Therefore, the numerator of $h^{\prime}\left(q\right)$ is positive,
and $h\left(q\right)$ decreases in $q$. Finally, we can prove \eqref{eq:hu}:
\begin{align}
 & h\left(q\right)<\underset{q\rightarrow0^{+}}{\lim}h\left(q\right)\nonumber \\
 & \;\overset{\left(a\right)}{=}\underset{q\rightarrow0^{+}}{\lim}\frac{\ln\left(1+\left(\frac{1}{d}-\frac{\ln\left(1+dq-q\right)}{d\ln\left(1-q\right)}\right)-1\right)}{\ln\left(1-q\right)}\nonumber \\
 & \;\overset{\left(b\right)}{=}\underset{q\rightarrow0^{+}}{\lim}\frac{\frac{1}{d}-\frac{\ln\left(1+dq-q\right)}{d\ln\left(1-q\right)}-1}{-q}\nonumber \\
 & \;\overset{\left(c\right)}{=}\frac{1}{d}\underset{q\rightarrow0^{+}}{\lim}\frac{\left(1-d\right)\ln\left(1-q\right)-\ln\left(1+dq-q\right)}{-q\ln\left(1-q\right)}\nonumber \\
 & \;\overset{\left(d\right)}{=}\frac{1}{d}\underset{q\rightarrow0^{+}}{\lim}\frac{\left(1-d\right)\ln\left(1-q\right)-\ln\left(1+dq-q\right)}{q^{2}}\nonumber \\
 & \;\overset{\left(e\right)}{=}\frac{1}{d}\underset{q\rightarrow0^{+}}{\lim}\frac{\left(d-1\right)\frac{1}{1-q}-\frac{d-1}{1+dq-q}}{2q}\nonumber \\
 & \;\overset{\left(f\right)}{=}\frac{1}{d}\underset{q\rightarrow0^{+}}{\lim}\frac{d^{2}q-dq}{2q\left(1-q\right)\left(1+dq-q\right)}\nonumber \\
 & \;\overset{\left(g\right)}{=}\frac{1}{d}\underset{q\rightarrow0^{+}}{\lim}\frac{d^{2}q-dq}{2q}=\frac{d}{2}-\frac{1}{2},
\end{align}
where (a) introduces infinitesimal term, i.e.,
\begin{equation}
\underset{q\rightarrow0^{+}}{\lim}\left(\frac{1}{d}-\frac{\ln\left(1+dq-q\right)}{d\ln\left(1-q\right)}\right)-1=0,
\end{equation}
according to \eqref{eq:h1/h2=00003D1}, (b) substitutes the numerator
based on $\ln\left(1+x\right)\thicksim x$ as $x\rightarrow0$, (c)
simplifies the formula, (d) substitutes the infinitesimal term $\ln\left(1-q\right)\thicksim-q$,
(e) applies the L'Hopital's rule, (f) simplifies the formula, and
(g) calculates the nonzero limit in the denominator, i.e., $\underset{q\rightarrow0^{+}}{\lim}\left(1-q\right)=\underset{q\rightarrow0^{+}}{\lim}\left(1+dq-q\right)=1$.
\end{proof}
With the two lemmas above, we can summarize the superiority of PoVD
in terms of fork rate in the following theorem.
\begin{thm}
\label{thm:basic t1}Assume that a network with the maximum delay
$d$ has the network capability indicator satisfying $c\leq\frac{1}{2}-\frac{1}{2d}$.
By setting $\delta=d$, PoVD with the $\delta$-spaced block time
distribution $\tilde{p}_{\delta}\left(r\right)$ can achieve a lower
fork rate than PoW under the same average block time, i.e., 
\begin{align*}
\mathsf{F}\left(\tilde{p}_{\delta}\left(r\right)\right) & <\mathsf{F}\left(\tilde{p}_{a}\left(r\right)\right),\\
\mathsf{B}\left(\tilde{p}_{a}\left(r\right)\right) & =\mathsf{B}\left(\tilde{p}_{\delta}\left(r\right)\right).
\end{align*}
\end{thm}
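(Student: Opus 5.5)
The plan is to fix the PoVD parameter $q:=p_{\delta}$ (with $\delta=d$), let the equal-block-time constraint determine the PoW parameter $p$ as a function of $q$, $n$ and $d$, and then reduce the fork-rate inequality to an explicit lower bound on $d(1-c)$ that Lemma \ref{lem:max n} and Lemma \ref{lem:hu} can control.

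\textbf{Step 1: the matching $p$.} Rewrite \eqref{eq:VDF average blocktime} as
\[
\mathsf{B}\left(\tilde{p}_{\delta}\left(r\right)\right)=\frac{1+\left(d-1\right)\left(1-q\right)^{n}}{1-\left(1-q\right)^{n}}.
\]
Equating this with \eqref{eq:PoW average blocktime} gives the constraint uniquely in closed form:
\[
1-\left(1-p\right)^{n}=\frac{1-\left(1-q\right)^{n}}{1+\left(d-1\right)\left(1-q\right)^{n}},\qquad \left(1-p\right)^{n}=\frac{d\left(1-q\right)^{n}}{1+\left(d-1\right)\left(1-q\right)^{n}}.
\]
For every $q\in(0,1)$ the right-hand side lies in $(0,1)$, so a valid $p\in(0,1)$ exists. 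This settles the block-time equality $\mathsf{B}\left(\tilde{p}_{a}\left(r\right)\right)=\mathsf{B}\left(\tilde{p}_{\delta}\left(r\right)\right)$. It remains to show the strict fork-rate inequality for this $p$.

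\textbf{Step 2: reduce the fork-rate inequality to a bound on $c$.} By \eqref{eq:pow forkrate} and \eqref{eq:VDF forkrate}, the claim $\mathsf{F}\left(\tilde{p}_{\delta}\left(r\right)\right)<\mathsf{F}\left(\tilde{p}_{a}\left(r\right)\right)$ is equivalent to
\[
\frac{p\left(1-p\right)^{nd\left(1-c\right)}}{1-\left(1-p\right)^{n}}<\frac{q\left(1-q\right)^{n-1}}{1-\left(1-q\right)^{n}}.
\]
Substitute the expression for $1-(1-p)^n$ from Step 1 and write $p=1-(1-p)$ with $(1-p)$ taken from the $n$-th root. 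Taking logarithms, and dividing by $\ln(1-p)<0$ (which reverses the inequality), turns the condition into a lower bound
\[
d\left(1-c\right)>\Phi_{n}\left(q\right),
\]
where $\Phi_n(q)$ is built from quantities of the form $\left(\frac{1+dq-q}{1-q}\right)^{1/n}-1$ and $\left(\frac{1}{1-q}\right)^{1/n}-1$. These are exactly the numerator and denominator of $\omega(n)$ in the proof of Lemma \ref{lem:max n}. I expect the bookkeeping to put $\Phi_n(q)$, up to a harmless $-\frac1n$ shift, in the form of the left-hand side of \eqref{eq:max n}, possibly after replacing $\ln(1-p)$ by a comparison in terms of $\ln(1-q)$.

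\textbf{Step 3: chain the lemmas.} Lemma \ref{lem:max n} bounds $\Phi_n(q)$ uniformly in $n$ by $h(q)$, the left-hand side of \eqref{eq:hu}. Lemma \ref{lem:hu} then gives $h(q)<\frac{d}{2}-\frac12$. The hypothesis $c\le\frac12-\frac1{2d}$ yields
\[
d(1-c)\ge\frac{d}{2}+\frac12>\frac{d-1}{2}>\Phi_{n}(q),
\]
so the fork-rate inequality holds strictly for every $n$ and every $q$. The term $\frac1n$ subtracted in \eqref{eq:max n} should absorb the $(1-q)^{n-1}$ versus $(1-q)^{n}$ mismatch in the PoVD numerator.

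\textbf{Main obstacle.} The hard part is Step 2: doing the algebra so that the exact exponents, the factor $p/q$ and the $(1-q)^{-1}$ factor combine into precisely the quantity in Lemma \ref{lem:max n}, rather than something merely comparable to it. The ratio $p/q$ is not a pure power of $(1-q)$, so I would first try to bound it using the monotonicity facts $f_1>f_2>0$ established in Lemma \ref{lem:max n}, discarding terms only in the favorable direction. If an exact match fails, the fallback is to prove $\Phi_n(q)\le$ the left-hand side of \eqref{eq:max n} via elementary inequalities such as $1-x\le e^{-x}$ and $\ln(1+x)\le x$, and then apply the same chain.
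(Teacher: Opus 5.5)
Your skeleton matches the paper's: equate block times, turn the fork-rate inequality into $c<c^{\mathrm{th}}$, then bound $c^{\mathrm{th}}$ from below with Lemma~\ref{lem:max n} followed by Lemma~\ref{lem:hu}. Step 1 is correct. But Step 2 is the entire proof, you have not carried it out, and what you predict for it is false under your choice $q:=p_{\delta}$.

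With $q:=p_{\delta}$ the PoVD side contributes just $\frac{p_{\delta}}{1-p_{\delta}}=\frac{q}{1-q}$, while $\left(1-p\right)^{-1}=\left(\frac{1+(d-1)(1-q)^{n}}{d(1-q)^{n}}\right)^{1/n}$. Neither $\left(\frac{1+dq-q}{1-q}\right)^{1/n}-1$ nor $\left(\frac{1}{1-q}\right)^{1/n}-1$ ever appears. So the quantity Lemma~\ref{lem:max n} controls is not the one your reduction produces, and you are pushed into the vague fallback of estimating $p/q$.

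The missing idea is the paper's change of variable $q=1-\left(1-p\right)^{n}=1/\mathsf{B}\left(\tilde{p}_{a}\left(r\right)\right)$, where $p$ is the \emph{PoW} parameter. Then $\left(1-p\right)^{-1}=\left(1-q\right)^{-1/n}$, and \eqref{eq:relation pow vdf} gives $\left(1-p_{\delta}\right)^{-1}=\left(\frac{1+dq-q}{1-q}\right)^{1/n}$ and $\frac{(1-p_{\delta})^{n}}{1-(1-p_{\delta})^{n}}=\frac{1-q}{dq}$. Write the two sides of \eqref{eq:eq_th1} as $\frac{p_{\delta}}{1-p_{\delta}}\cdot\frac{(1-p_{\delta})^{n}}{1-(1-p_{\delta})^{n}}$ and $\frac{p}{1-p}\cdot\frac{(1-q)^{d(1-c)+1/n}}{q}$. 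The factors $\frac{p_{\delta}}{1-p_{\delta}}$ and $\frac{p}{1-p}$ are then exactly the numerator and denominator of $\omega(n)$, and \eqref{eq:eq_th1} is \emph{equivalent} to
\[
\frac{\omega\left(n\right)}{d}>\left(1-q\right)^{d\left(1-c\right)+\frac{1}{n}-1}
\qquad\text{i.e.}\qquad
d\left(1-c\right)>1+\frac{\ln\left(\omega\left(n\right)/d\right)}{\ln\left(1-q\right)}-\frac{1}{n},
\]
which is $c<c^{\mathrm{th}}$ in \eqref{eq:cub}. No approximation is involved.

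Doing this exact reduction also shows that your constants are misplaced.
\begin{itemize}
\item The $-\frac{1}{n}$ comes from the PoW factor $p=(1-p)\cdot\frac{p}{1-p}$ with $1-p=(1-q)^{1/n}$. It does not come from the $(1-p_{\delta})^{n-1}$ versus $(1-p_{\delta})^{n}$ mismatch, which is absorbed into $\frac{p_{\delta}}{1-p_{\delta}}$.
\item The threshold carries an additional $+1$, coming from the factor $1-q$ in $\frac{1-q}{dq}$.
\end{itemize}
Consequently the lemmas only give $1+\frac{\ln(\omega(n)/d)}{\ln(1-q)}-\frac{1}{n}<1+\frac{d-1}{2}=\frac{d+1}{2}$, while the hypothesis gives $d(1-c)\ge\frac{d+1}{2}$. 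There is no slack. Your chain $\frac{d+1}{2}>\frac{d-1}{2}>\Phi_{n}(q)$ throws away exactly the unit the argument needs; with the correct $\Phi_{n}$, its last inequality fails for large $n$ and small $q$.

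This also rules out your fallback. Lemmas~\ref{lem:max n} and~\ref{lem:hu} are both tight in the limit $n\to\infty$, $q\to0^{+}$, where $c^{\mathrm{th}}\to\frac{1}{2}-\frac{1}{2d}$. Any lossy estimate inserted in Step 2, such as $1-x\le e^{-x}$ or $\ln(1+x)\le x$, would have to become exact in that limit. It is the exact reduction above that lets the two lemmas close the proof.
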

\begin{proof}
According to \eqref{eq:pow forkrate} and \eqref{eq:VDF forkrate},
$\mathsf{F}\left(\tilde{p}_{\delta}\left(r\right)\right)<\mathsf{F}\left(\tilde{p}_{a}\left(r\right)\right)$
with $\delta=d$ is equivalent to:
\begin{equation}
\frac{p_{\delta}\left(1-p_{\delta}\right)^{n-1}}{1-\left(1-p_{\delta}\right)^{n}}>\frac{p\left(1-p\right)^{nd\left(1-c\right)}}{1-\left(1-p\right)^{n}}.\label{eq:eq_th1}
\end{equation}
Substituting \eqref{eq:PoW average blocktime} and \eqref{eq:VDF average blocktime}
into $\mathsf{B}\left(\tilde{p}_{\delta}\left(r\right)\right)=\mathsf{B}\left(\tilde{p}_{a}\left(r\right)\right)$
with $\delta=d$ yields
\begin{equation}
\left(1-p_{\delta}\right)^{n}=\frac{\left(1-p\right)^{n}}{d\left(1-\left(1-p\right)^{n}\right)+\left(1-p\right)^{n}}.\label{eq:relation pow vdf}
\end{equation}
Combining \eqref{eq:eq_th1} and \eqref{eq:relation pow vdf}, and
letting $q=1-\left(1-p\right)^{n}\in\left(0,1\right)$, we derive
the condition for the network capability indicator $c$ that satisfies
both $\mathsf{F}\left(\tilde{p}_{\delta}\left(r\right)\right)<\mathsf{F}\left(\tilde{p}_{a}\left(r\right)\right)$
and $\mathsf{B}\left(\tilde{p}_{a}\left(r\right)\right)=\mathsf{B}\left(\tilde{p}_{\delta}\left(r\right)\right)$:
\begin{equation}
c<c^{\mathrm{th}}\triangleq1-\frac{1}{d}\left(1+\frac{\ln\omega\left(n\right)/d}{\ln\left(1-q\right)}-\frac{1}{n}\right),\label{eq:cub}
\end{equation}
where $\omega\left(n\right)$ is given by
\begin{equation}
\omega\left(n\right)=\frac{\left(\frac{1+dq-q}{1-q}\right)^{\frac{1}{n}}-1}{\left(\frac{1}{1-q}\right)^{\frac{1}{n}}-1}.
\end{equation}
Next, we derive the lower bound for $c^{\mathrm{th}}$:
\begin{align}
c^{\mathrm{th}} & \overset{\left(a\right)}{>}1-\frac{1}{d}\left(1+\frac{\ln\left(\frac{1}{d}-\frac{\ln\left(1+dq-q\right)}{d\ln\left(1-q\right)}\right)}{\ln\left(1-q\right)}\right)\nonumber \\
 & \overset{\left(b\right)}{>}1-\frac{1}{d}\left(1+\frac{d}{2}-\frac{1}{2}\right)=\frac{1}{2}-\frac{1}{2d},
\end{align}
where (a) is because of Lemma \ref{lem:max n}, and (b) is because
of Lemma \ref{lem:hu}. Therefore, given the network condition $c\leq\frac{1}{2}-\frac{1}{2d}$,
when $\mathsf{B}\left(\tilde{p}_{a}\left(r\right)\right)=\mathsf{B}\left(\tilde{p}_{\delta}\left(r\right)\right)$,
the inequality $\mathsf{F}\left(\tilde{p}_{\delta}\left(r\right)\right)<\mathsf{F}\left(\tilde{p}_{a}\left(r\right)\right)$
always holds.
\begin{figure}[t]
\centering{}\includegraphics[width=0.45\textwidth]{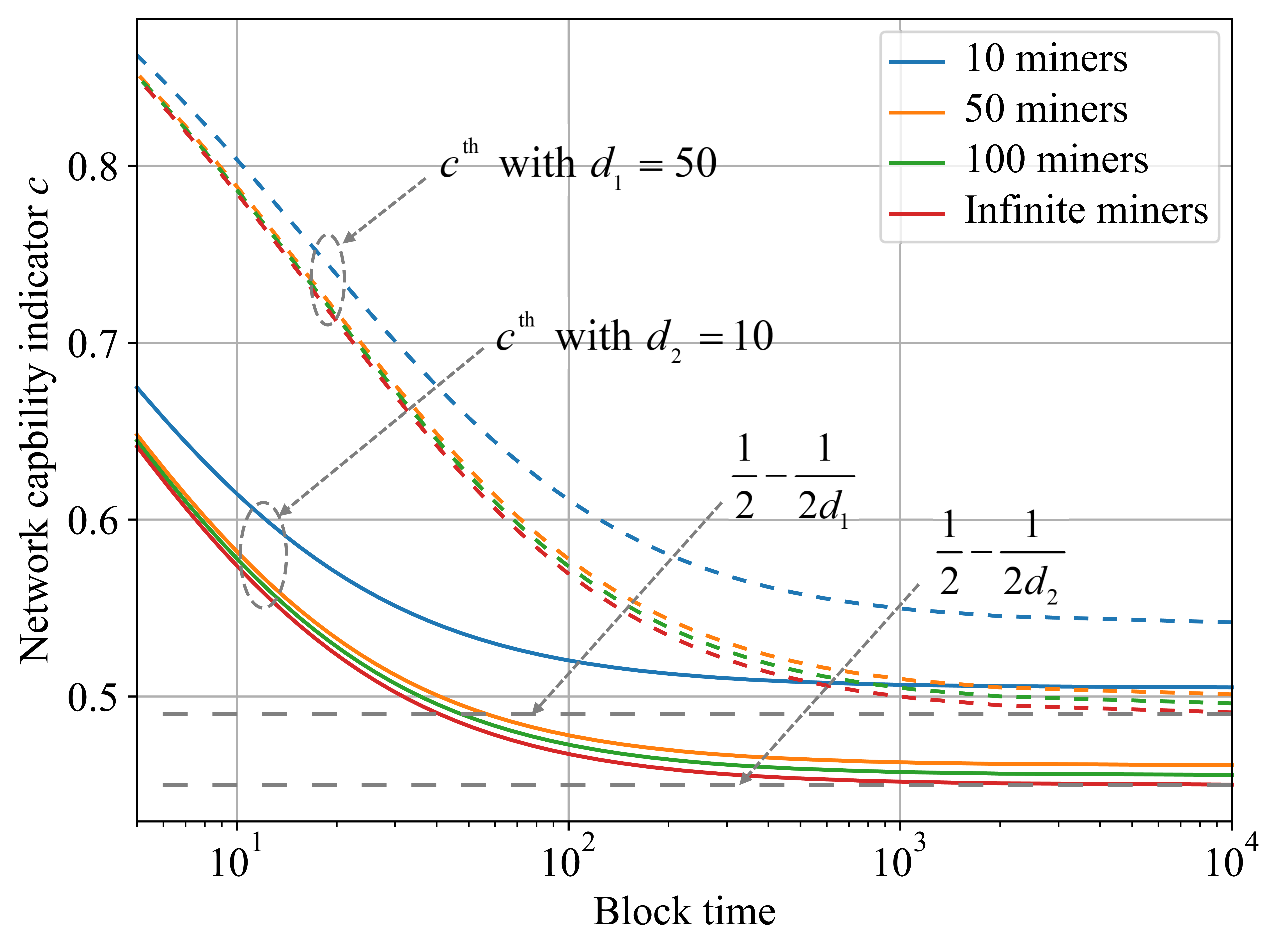}\caption{Threshold $c^{\mathrm{th}}$ of network capability indicator under
different block time.\label{fig:Boundary network}}
\vspace{-0.4cm}
\end{figure}
\end{proof}
Theorem \ref{thm:basic t1} indicates that, as long as the network
capability indicator satisfies $c\leq\frac{1}{2}-\frac{1}{2d}$, PoVD
can achieve a lower fork rate than PoW with the same block time by
using the $\delta$-spaced block time distribution $\tilde{p}_{\delta}\left(r\right)$.
Note that Theorem \ref{thm:basic t1} is a sufficient criterion for
PoVD to achieve higher blockchain consistency than PoW without sacrificing
throughput, since Theorem \ref{thm:basic t1} is based on the $\delta$-spaced
block time distribution $\tilde{p}_{\delta}\left(r\right)$, rather
than the optimal distribution. This further highlights the great potential
of PoVD. 

Note that some networks with long-tail latency do not satisfy the
condition of $c\leq\frac{1}{2}-\frac{1}{2d}$. These networks may
just need much fewer rounds than $d$ to distribute the block among
most miners, e.g., 95\% or more; however, it may take a very lengthy
time to ensure that all the miners receive the message. In this case,
we can use a smaller $\delta<d$, and PoVD still outperforms PoW in
terms of fork rate. It is verified by our experiments in Section \ref{sec:Simulation}.
The reason behind choosing $\delta<d$ is that propagation can be
approximately considered complete after $\delta$ rounds. This allows
the network capability indicator with delay $\delta$ to satisfy the
condition $c\leq\frac{1}{2}-\frac{1}{2\delta}$. Moreover, from a
practical perspective, since the majority of miners have already reached
a consensus on the main chain after $\delta$ rounds of propagation,
the few remaining miners who have not yet received the block are unlikely
to fork.

Moreover, the proof of Theorem \ref{thm:basic t1} provides $c^{\mathrm{th}}$
in \eqref{eq:cub} as the accurate threshold of network capability
indicator beyond the sufficient condition $c\leq\frac{1}{2}-\frac{1}{2d}$
in the statement of Theorem \ref{thm:basic t1}. PoVD can outperform
PoW, as long as $c\leq c^{\mathrm{th}}$ holds, and meanwhile $c\leq\frac{1}{2}-\frac{1}{2d}$
is just a stricter condition on the network. (Fig. \ref{fig:Boundary network}
shows that $c^{\mathrm{th}}>\frac{1}{2}-\frac{1}{2d}$ always holds.)
As $\mathsf{B}\left(\tilde{p}_{a}\left(r\right)\right)\rightarrow\infty$
and $n\rightarrow\infty$, the threshold $c^{\mathrm{th}}$ gradually
approaches the lower bound $\frac{1}{2}-\frac{1}{2d}$. The figure
also shows that when the block time is smaller, i.e., a high blockchain
throughput is required, PoVD imposes weaker network requirements (and
thus a wider range of $c$), which implies that PoVD offers greater
design space for improving consistency under some high throughput
requirements.

\cprotect\section{\label{sec:Simulation}Experimental Results
\begin{figure}[t]
\protect\centering{}\protect\includegraphics[width=0.45\textwidth]{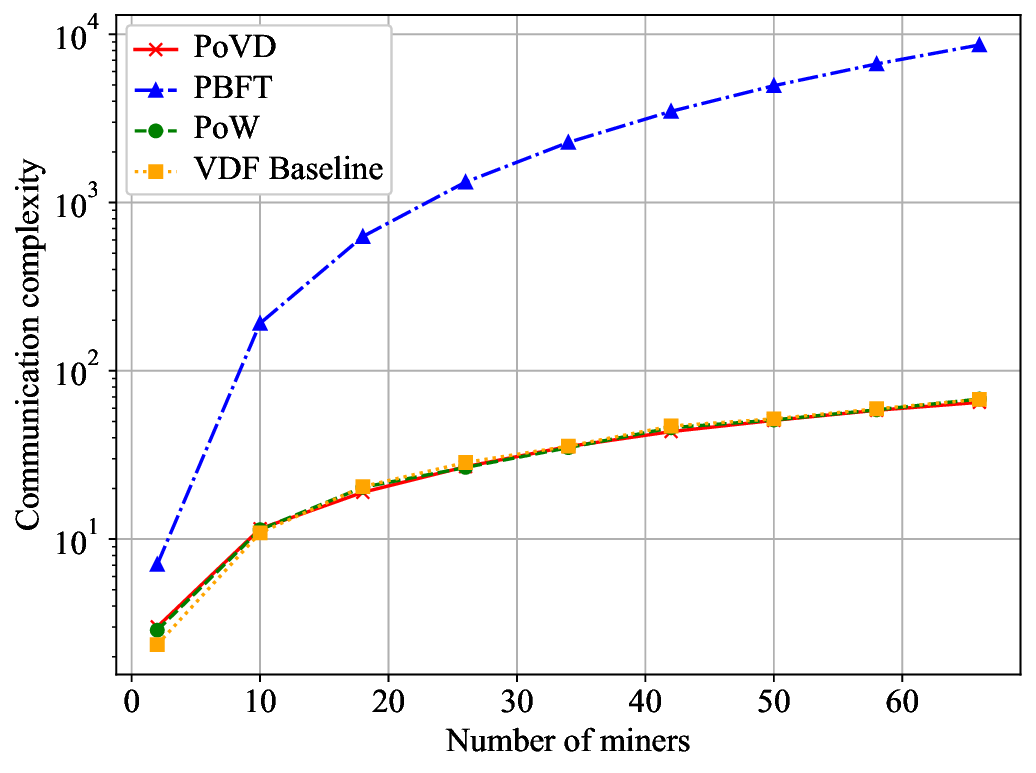}\protect\caption{Communication complexity of different consensus protocols under varying
numbers of miners. (Communication complexity is defined as the average
number of messages exchanged to generate a new block.)\label{fig:Comunication Complexity}}
\vspace{-0.4cm}\protect
\end{figure}
\begin{figure}[t]
\protect\centering{}\protect\includegraphics[width=0.45\textwidth]{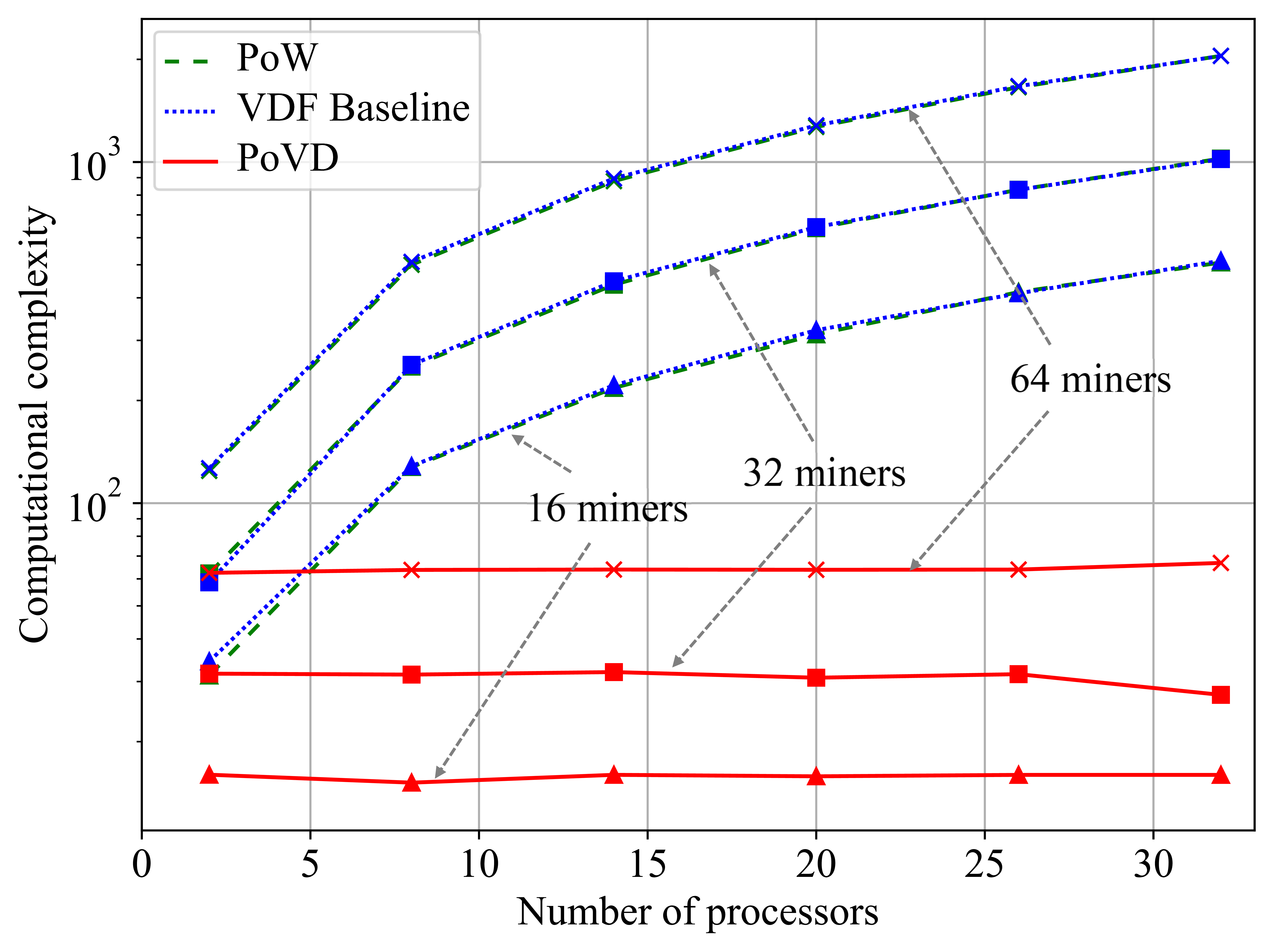}\protect\caption{Computational complexity of different consensus protocols with different
numbers of processors. (Computational complexity is defined as the
average number of oracles performed by a miner to generate a block.)\label{fig:Computation Complexity}}
\vspace{-0.4cm}\protect
\end{figure}
\begin{figure}[t]
\protect\centering{}\protect\includegraphics[width=0.45\textwidth]{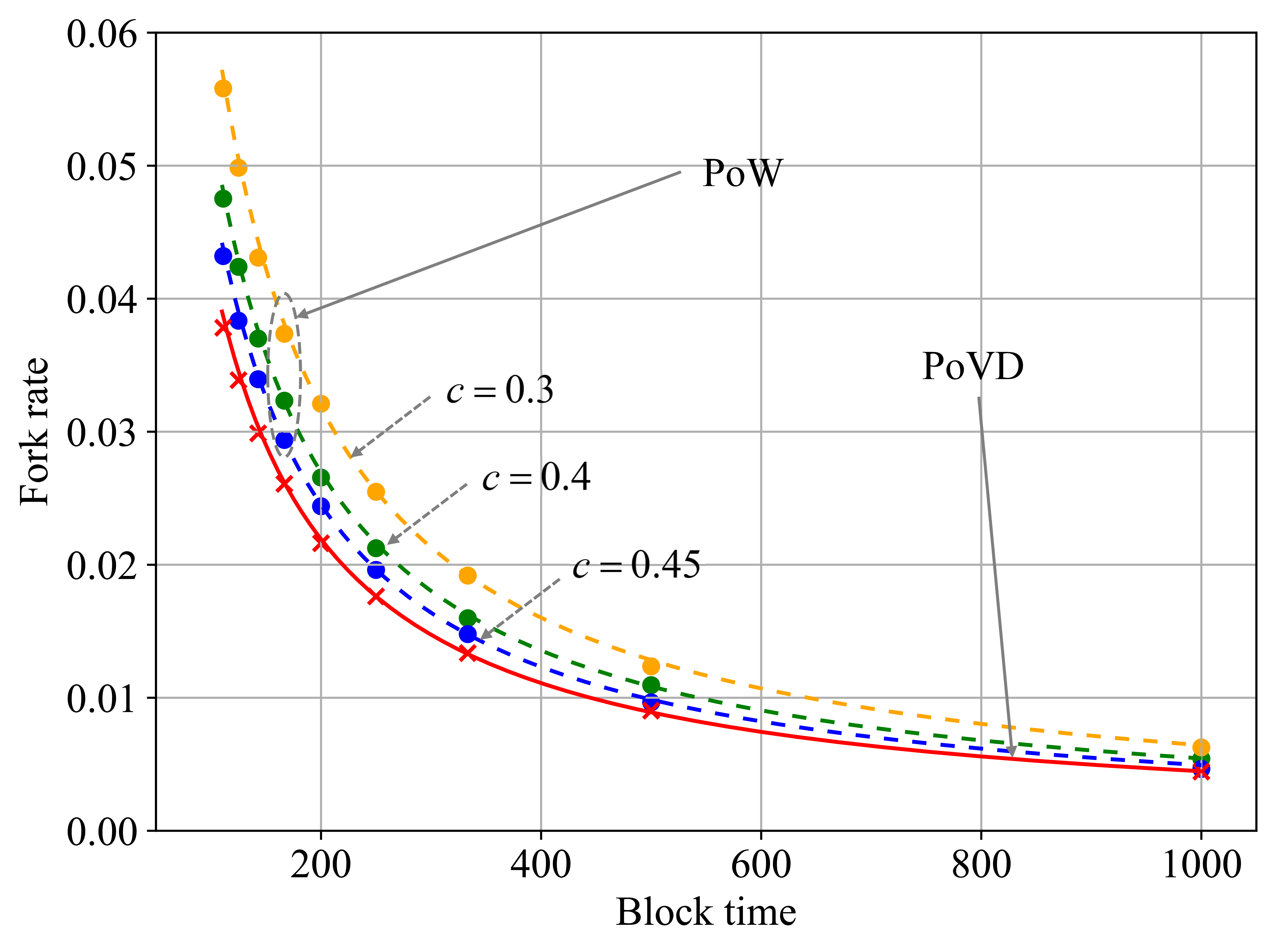}\protect\caption{Fork rate of PoW and PoVD under different block time with $n=10$,
and $d=\delta=10$. (The curves are analytical results, and the markers
are experimental ones.) \label{fig:n=00003D10, c=00003D0.4 and l=00003D10}}
\vspace{-0.4cm}\protect
\end{figure}
\begin{figure}
\protect\begin{centering}
\protect\includegraphics[width=0.45\textwidth]{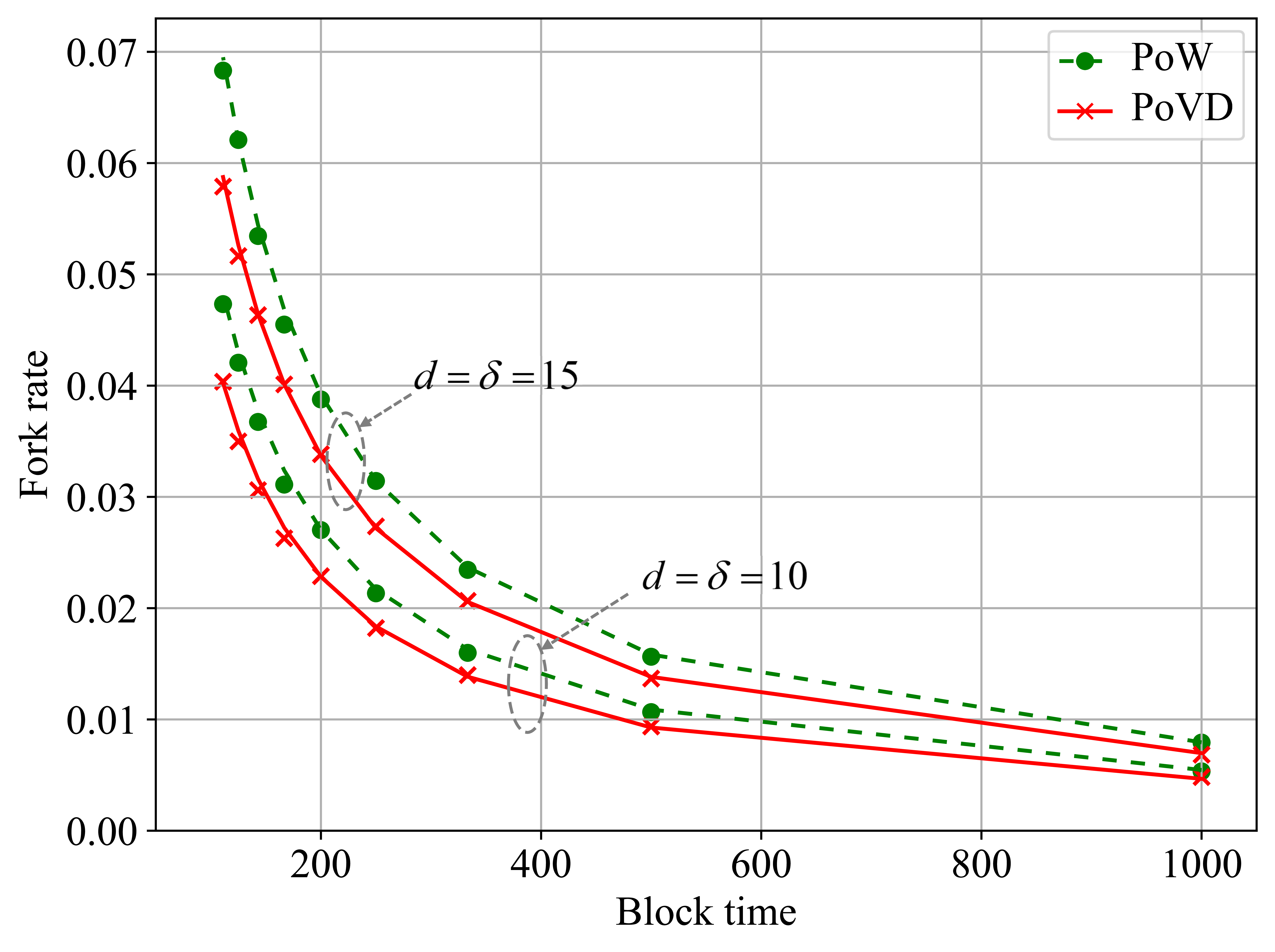}\protect
\par\end{centering}
\protect\caption{Fork rate of PoW and PoVD under different block time with $n=16$,
and $c=0.4$. (The curves are analytical results, and the markers
are experimental ones.) \label{fig:Fork rate simulation}}

\vspace{-0.4cm}
\end{figure}
}

In this section, we conduct experiments based on the ChainXim blockchain
simulator\footnote{ChainXim is a blockchain simulator developed by XinLab to simulate
and verify blockchain systems under different parameter settings,
and it is available on https://github.com/ChainXim-Team/ChainXim.} to demonstrate the communication- and computation-efficiency of PoVD
and validate PoVD's fork rate superiority. In the following figures,
PoVD, PoW, and PBFT refer to the corresponding protocols in this work,
\cite{Nakamoto2008} and \cite{Castro1999}. VDF baseline refers to
a class of VDF-based consensus protocols \cite{Long2019,Han2020,Mirkin2024}. 

\subsection{Communication and Computational Complexity}

First, we illustrate the communication and computational complexities
of different consensus protocols in Fig. \ref{fig:Comunication Complexity}
and Fig. \ref{fig:Computation Complexity}, respectively. In Fig.
\ref{fig:Comunication Complexity}, communication complexity is defined
as the average number of messages exchanged to generate a new block.
We can observe that, for all consensus protocols, the communication
complexity increases with the number of miners, but at different rates.
The communication complexity of PBFT increases significantly as the
number of miners grows. In contrast, PoVD, PoW, and the VDF baseline
are communication efficient. As the network scales, the advantage
of communication complexity becomes more pronounced, highlighting
PoVD's scalability for bandwidth-limited networks. 

Fig. \ref{fig:Computation Complexity} presents the computational
complexity of PoVD, PoW, and the VDF baseline, which is defined as
the average number of oracles performed by a miner to generate a block.
Specifically, in PoW, oracles refer to the computation of hashes,
while in VDF baseline and PoVD, they refer to the number of square
computations. In Fig. \ref{fig:Computation Complexity}, every miner
is equipped with a different number of processors, representing varying
power of parallelism. Fig. \ref{fig:Computation Complexity} demonstrates
that PoVD consistently maintains a low computational cost regardless
of the number of processors, whereas PoW and the VDF baseline exhibit
a significant increase. This difference in computational complexity
emphasizes PoVD's advantage in computational-efficiency compared to
PoW and other benchmarks, making it suitable for lightweight devices.

\subsection{Fork Rate Superiority}

Now, we present the fork rates under different conditions. VDF baselines
have similar performance to PoW and are thus omitted. In Fig. \ref{fig:n=00003D10, c=00003D0.4 and l=00003D10}
and Fig. \ref{fig:Fork rate simulation}, the network capability indicator
is set to satisfy $c\leq\frac{1}{2}-\frac{1}{2d}$. Both figures demonstrate
that the fork rate of PoVD is lower than that of PoW, which aligns
with Theorem \ref{thm:basic t1}. Moreover, Fig. \ref{fig:n=00003D10, c=00003D0.4 and l=00003D10}
shows that when the propagation delay $d$ is fixed, the advantage
of PoVD in fork rate becomes more significant as the propagation speed
decreases (i.e., as $c$ decreases). In contrast, Fig. \ref{fig:Fork rate simulation}
illustrates that when the network capability indicator $c$ remains
constant, increasing the propagation delay $d$ does not significantly
affect the consistency advantage but increases the fork rate.

We also present the fork rates across two special propagation networks
in Fig. \ref{fig:Fork rate simulation of two vector}, where the miners
are connected in a loop. In this case, we have the theoretical propagation
vector and thus derive the accurate fork rate and block time accordingly.
In Fig. \ref{fig:Fork rate simulation of two vector}, analytical
and simulation results are represented by lines and markers, respectively.
The resulting propagation vectors are $\mathbf{w}_{8}=\left(1/16,3/16,\ldots,15/16\right)$
for 16 miners, and $\mathbf{w}_{16}=\left(1/32,3/32,\ldots,31/32\right)$
for 32 miners. As expected, Fig. \ref{fig:Fork rate simulation of two vector}
shows that PoVD can achieve a lower fork rate than PoW, as it can
be easily verified that the network capability indicators corresponding
to $\mathbf{w}_{8}$ and $\mathbf{w}_{16}$ also satisfy the sufficient
condition $c\leq\frac{1}{2}-\frac{1}{2d}$.

Fig. \ref{fig:Fork rate simulation of l<d} shows the performance
of PoVD over a long-tail network with a large $d$. In the experiments,
we set $w_{18}\thickapprox0.97$ and $w_{19}\thickapprox0.99$ for
$d=20$ and $w_{46}\thickapprox0.98$ and $w_{48}\thickapprox0.99$
for $d=50$. Fig. \ref{fig:Fork rate simulation of l<d} implies that,
even when $d>\delta$, PoVD still exhibits a lower fork rate than
PoW. This supports the remarks given after Theorem \ref{thm:basic t1}.
Hence, the network condition for PoVD to achieve better performance
than PoW is quite weak.

To more intuitively demonstrate the fundamental differences in fork
rates between PoVD and PoW, we evaluate the concurrent fork rate and
the propagation fork rate under a 16-miner network in Fig. \ref{fig:Fork rate under different Target}.
In the legend, concurrent fork refers to the generation of multiple
blocks in the same round, and propagation fork refers to the block
conflicts during propagation. The fork rate of PoW consists of both
the concurrent and propagation fork rate, and the propagation fork
rate is the dominant component. In contrast, PoVD has no propagation
fork, which makes the total fork rate lower than PoW.

\cprotect\subsection{Experimental Energy Consumption
\begin{figure}
\protect\begin{centering}
\protect\includegraphics[width=0.45\textwidth]{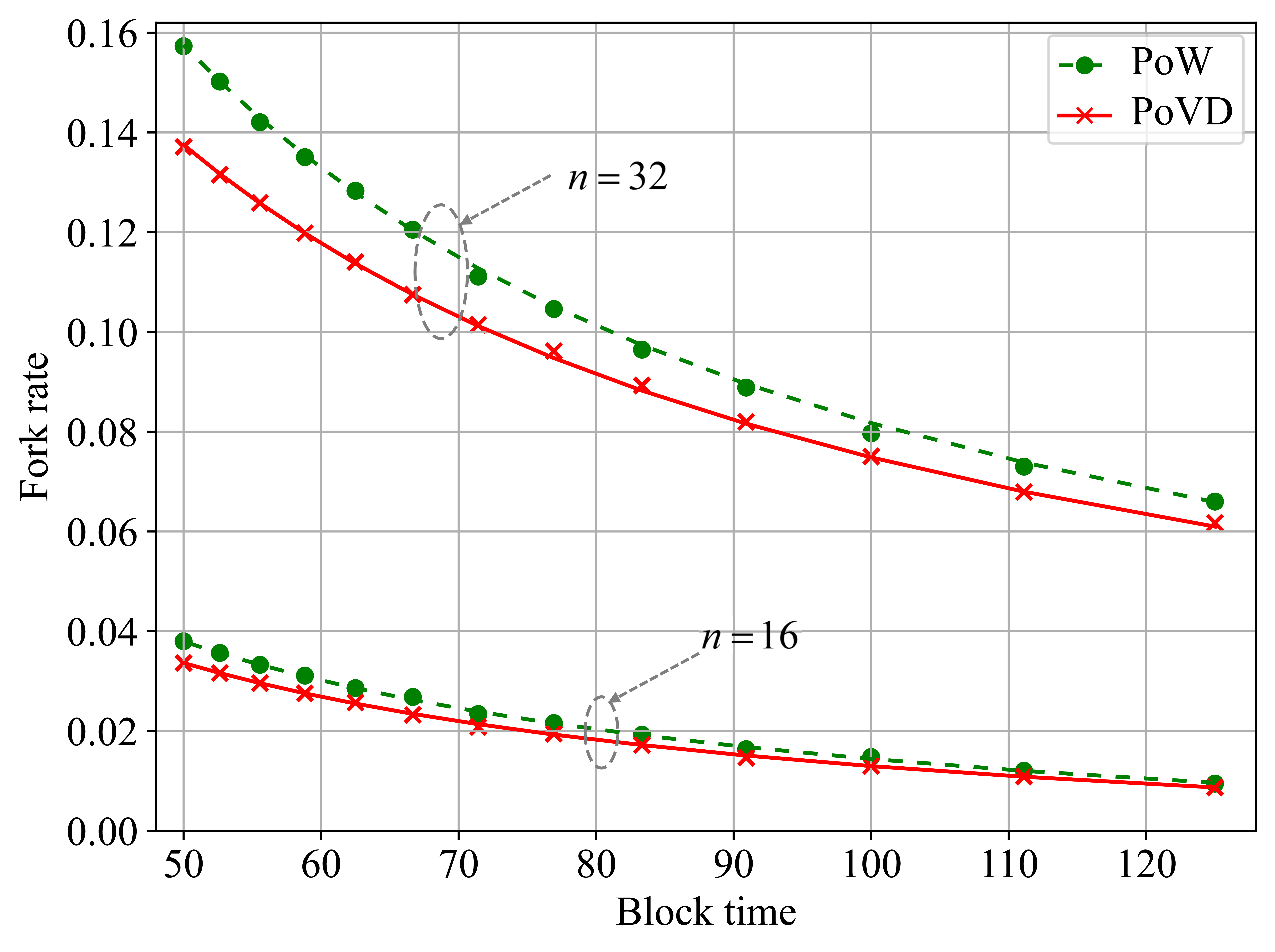}\protect
\par\end{centering}
\protect\caption{Fork rate of PoW and PoVD under circle topology with 16 miners, and
32 miners. (The curves are analytical results, and the markers are
experimental ones.) \label{fig:Fork rate simulation of two vector}}

\vspace{-0.4cm}
\end{figure}
\begin{figure}
\protect\begin{centering}
\protect\includegraphics[width=0.45\textwidth]{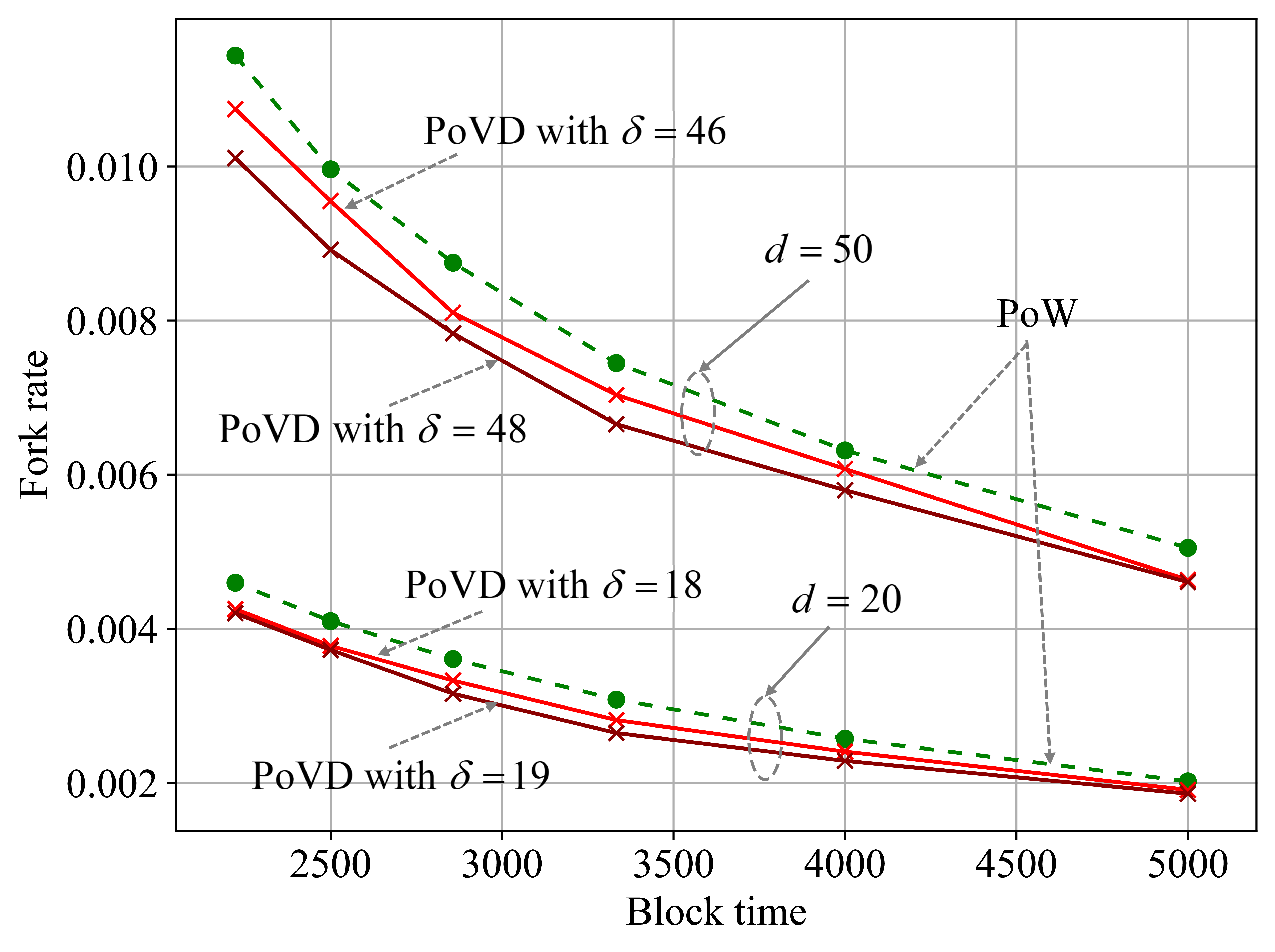}\protect
\par\end{centering}
\protect\caption{Fork rate of PoW and PoVD under different block time with 256 miners.
\label{fig:Fork rate simulation of l<d}}

\vspace{-0.4cm}
\end{figure}
\begin{figure}[t]
\protect\centering{}\protect\includegraphics[width=0.45\textwidth]{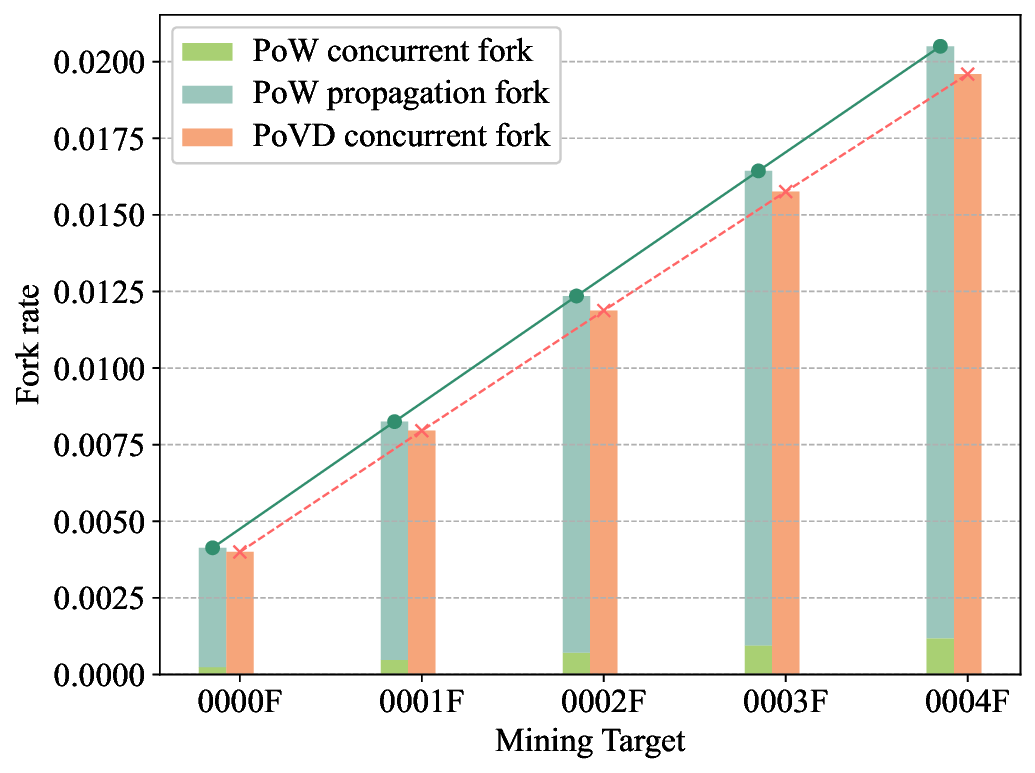}\protect\caption{Statistical results of different types of fork rate under varying
difficulty target with 16 miners. (PoVD has no propagation fork.)\label{fig:Fork rate under different Target}}
\vspace{-0.4cm}\protect
\end{figure}
\begin{figure}[t]
\protect\centering{}\protect\includegraphics[width=0.465\textwidth]{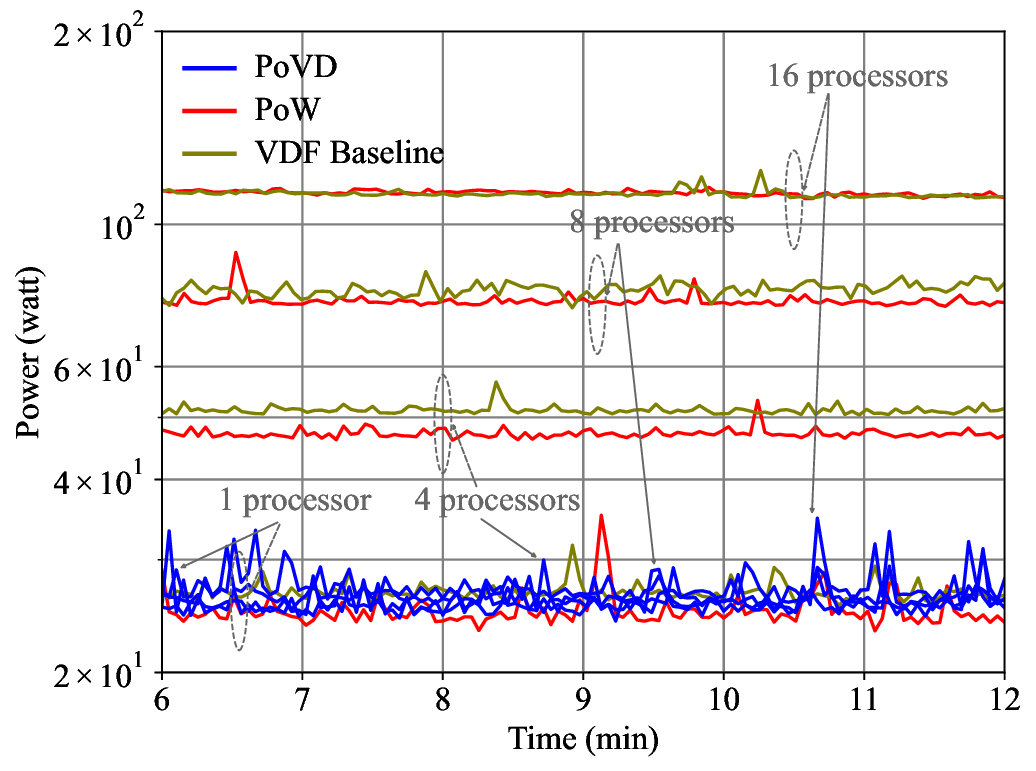}\protect\caption{Experimental energy consumption of different consensus protocols with
various numbers of processors.\label{fig:Energy consumption}}
\vspace{-0.4cm}\protect
\end{figure}
}

Finally, we measure the CPU power consumption of PoVD, PoW, and VDF
baseline through experiments. The experiments are based on a Windows
desktop computer with a 16-core 4.3 GHz CPU and 128 GB of RAM. Fig.
\ref{fig:Energy consumption} presents the real-time energy consumption
by using different numbers of processors. In the figure, the power
curves from light to dark correspond to 1, 4, 8, and 16 processors,
respectively. We can observe that the energy consumption of PoW and
VDF baseline with a single process is almost identical to that of
PoVD. However, as the number of processors increases, the energy consumption
of PoW and the VDF baseline grows significantly, while PoVD remains
at low power consumption. This comparison highlights that PoVD is
more energy-efficient than PoW and VDF baseline and is more suitable
for resource-constrained devices.

\section{\label{sec:Conclusions}Conclusions}

In this study, we proposed PoVD, an efficient consensus protocol
based on VDF. We designed the block generation and verification rules
of PoVD and discussed potential security implications with corresponding
countermeasures. Our design significantly reduces both communication
and computation overhead and makes PoVD suitable for establishing
consensus in resource-constrained and bandwidth-limited environments.
Furthermore, PoVD can control the block time distribution according
to, e.g., the network environment, and thus can more flexibly balance
the trade-off between blockchain consistency and throughput. Due to
the above reasons, we rethought the mining model of PoVD and derived
the fork rate and block time of PoVD under an adjustable block time
distribution. Through mathematical proof, we pointed out that, under
some relatively weak network conditions, PoVD can achieve a lower
fork rate than PoW for equivalent blockchain throughput. Finally,
experiments against PBFT, PoW, and other VDF-based benchmarks illustrated
the low computational and communication complexities of PoVD and highlighted
its ability to adjust block time distribution for improving blockchain
consistency. 

\bibliographystyle{IEEEtran}
\bibliography{IEEEabrv,refference}

\end{document}